\numberwithin{equation}{section} \makeatletter
\renewcommand{\tilde}{\widetilde}
\renewcommand{\hat}{\widehat}
\newcommand{\bref}[1]{\textbf{\ref{#1}}}
\newcommand{\gh}[1]{\mathrm{gh}(#1)}
\newcommand{\dd}{\partial}
\newcommand{\tensor}{\otimes}
\renewcommand{\geq}{\,{\geqslant}\,}
\renewcommand{\leq}{\,{\leqslant}\,}
\newcommand{\inner}[2]{\langle #1{,}\,#2\rangle}
\newcommand{\binner}[2]{%
  {\langle}\kern-4.15pt{\langle}#1{,}\,#2{\rangle}\kern-4.15pt{\rangle}}
\newcommand{\scommut}[2]{\{#1{,}\,#2\}_\ast}
\newcommand{\qcommut}[2]{[#1{,}\,#2]_*}
\newcommand{\pb}[2]{\left\{{}#1{},{}#2{}\right\}}
\newcommand{\half}{\mathchoice{%
    \ffrac{1}{2}}{\frac{1}{2}}{\frac{1}{2}}{\frac{1}{2}}}
\newcommand{\ffrac}[2]{\raisebox{.5pt}%
  {\footnotesize$\displaystyle\frac{#1}{#2}$}\kern1pt}
\newcommand{\dl}[1]{\mathchoice{\ffrac{\dd}{\dd #1}}{\frac{\dd}{\dd  #1}}{\ffrac{\dd}{\dd #1}}{\ffrac{\dd}{\dd #1}}}
\newcommand{\manifold}[1]{\mathscr{#1}}
\newcommand{\manX}{\manifold{X}}
\newcommand{\fC}{\mathbb{C}}
\definecolor{arXiv}{named}{Maroon}
\definecolor{ColorCite}{named}{BrickRed}
\definecolor{ColorLink}{named}{NavyBlue}
\definecolor{ColorURL}{named}{RoyalBlue}
\definecolor{ColorMail}{named}{NavyBlue}
\newtheorem{theorem}{Theorem}[section]
\newtheorem*{theorem*}{Theorem}
\newtheorem{lemma}[theorem]{Lemma}
\newtheorem*{lemma*}{Lemma}
\newtheorem{proposition}[theorem]{Proposition}
\newtheorem{corollary}[theorem]{Corollary}
\newcommand{\pl}{\partial}
\newcommand{\Tr}{\mathrm{Tr}}
\newcommand{\PPP}{{\boldsymbol{P}}}
\newcommand{\LLL}{{\boldsymbol{L}}}
\newcommand{\KKK}{{\boldsymbol{K}}}
\newcommand{\DDD}{{\boldsymbol{D}}}
\newcommand{\TTT}{{\boldsymbol{T}}}
\newcommand{\hs}{{\ensuremath{\mathfrak{hs}}}}
\newcommand{\bloody}{D}
\newcommand{\tr}{\mathrm{tr}}
\newcommand{\C}{\mathbb{C}}
\newcommand{\N}{\mathbb{N}}
\newcommand{\R}{\mathbb{R}}
\newcommand{\cA}{\mathcal{A}}
\newcommand{\cC}{\mathcal{C}}
\newcommand{\cD}{\mathcal{D}}
\newcommand{\cL}{\mathcal{L}}
\newcommand{\cO}{\mathcal{O}}
\newcommand{\cS}{\mathcal{S}}
\newcommand{\cU}{\mathcal{U}}
\begin{document}
\pagenumbering{gobble}
\hfill
\vskip 0.01\textheight
\begin{center}
{\Large\bfseries 
Covariant action for conformal higher spin gravity}

\vskip 0.03\textheight
\renewcommand{\thefootnote}{\fnsymbol{footnote}}
Thomas \textsc{Basile}${}^a$, Maxim \textsc{Grigoriev}${}^{b,c}$ \& Evgeny \textsc{Skvortsov}\footnote{Research Associate of the Fund for Scientific Research -- FNRS, Belgium}${}^{a}$\footnote{Also on leave from Lebedev Institute of Physics, Moscow, Russia}
\renewcommand{\thefootnote}{\arabic{footnote}}
\vskip 0.03\textheight

{\em ${}^{a}$ Service de Physique de l'Univers, Champs et Gravitation, \\ Universit\'e de Mons, 20 place du Parc, 7000 Mons, 
Belgium}\\

{\em ${}^{b}$ Lebedev Institute of Physics,\\
Leninsky ave. 53, 119991 Moscow, Russia}\\

{\em ${}^{c}$ Institute for Theoretical and Mathematical Physics,\\
Lomonosov Moscow State University, 119991 Moscow, Russia}\\

\begin{abstract}
    Conformal Higher Spin Gravity is a higher spin extension of Weyl gravity and is a 
    family of local higher spin theories,
    which was put forward by Segal and Tseytlin. We propose a manifestly
    covariant and coordinate-independent action for these theories. The result is based on an interplay
    between higher spin symmetries and deformation quantization: a locally equivalent but manifestly background-independent reformulation, known as the parent system, of the off-shell multiplet of conformal higher spin fields (Fradkin--Tseytlin fields) can be interpreted in terms of Fedosov deformation quantization of the underlying cotangent bundle. 
    This brings into the game the invariant quantum trace, induced by the Feigin--Felder--Shoikhet cocycle of Weyl algebra, which extends Segal's action into a gauge invariant and globally well-defined action functional on the space of configurations of the parent system. 
    The same action can be understood within the worldline approach
    as a correlation function in the topological quantum mechanics
    on the circle.
\end{abstract}

\end{center}
\newpage
\tableofcontents
\newpage
\pagenumbering{arabic}
\setcounter{page}{1}

\section{Introduction}

There are very few examples of (covariant) actions for higher spin
(HS) gravities at present. (1) In three dimensions, there is a class
of topological higher spin theories
\cite{Blencowe:1988gj,Bergshoeff:1989ns,Pope:1989vj,Fradkin:1989xt,Campoleoni:2010zq,Henneaux:2010xg,Grigoriev:2019xmp,Grigoriev:2020lzu},
which encompasses massless, partially-massless and conformal
(higher spin) fields. The actions for these theories are simply the Chern--Simons
action for various Lie algebras that can be thought of as the higher spin
extensions of Poincar\'e, (anti-)de Sitter or conformal algebras.
(2) In the light-cone gauge, Chiral higher spin gravity in flat space
admits a very simple action
\cite{Metsaev:1991mt,Metsaev:1991nb,Ponomarev:2016lrm,Skvortsov:2018jea,Skvortsov:2020wtf}.
The theory has two contractions \cite{Ponomarev:2017nrr}, which
can be understood as higher spin extensions of self-dual Yang--Mills
and of self-dual gravity theories, 
and have simple covariant actions \cite{Krasnov:2021nsq}. 
Some recent progress has been made towards uplifting Chiral theory
to twistor space \cite{Tran:2021ukl,Tran:2022tft,Herfray:2022prf,Adamo:2022lah}, where the Chern--Simons action \cite{Tran:2022tft}
captures correctly the cubic interactions both in flat and (A)dS spaces.
(3) IKKT model for a higher spin algebra \cite{Sperling:2017dts,Fredenhagen:2021bnw,Steinacker:2022jjv}
is an example of a non-commutative field theory with higher spin
fields in the classical limit. (4) The last but not the least is
the class of conformal higher spin (CHS) gravities
\cite{Tseytlin:2002gz, Segal:2002gd, Bekaert:2010ky}, which is
the subject of this paper. 

Conformal higher spin gravities (CHS gravity) are higher spin extensions
of conformal gravity. In four dimensions, it is an extension
of Weyl gravity. More generally, conformal gravities are theories
of a metric $g_{\mu\nu}(x)$ that are invariant under diffeomorphisms
and local Weyl rescalings
\begin{align}
    g_{\mu\nu}(x)\to g'_{\mu\nu}(x) = \Omega^2(x)\,g_{\mu\nu}(x)\,.
    \label{eq:Weyl_rescaling}
\end{align}
They exist in even dimensions $n\geq4$ and the cases of $n=2$
and $n=3$ are somewhat special, the latter one admitting
a Chern--Simons formulation \cite{vanNieuwenhuizen:1985cx,Horne:1988jf},
which was discussed above. In $n=4$ dimensions there is a unique action,
the Weyl action, while for $n>4$ there is an ambiguity growing
with dimension in what conformal gravity is, which is related
to the growing number of conformal invariants (see e.g.
\cite{Bonora:1985cq, Deser:1993yx, Boulanger:2004zf} for explicit
expressions in $6$ and $8$ dimensions, and also \cite{Boulanger:2018rxo}
for the recent progress concerning the general classification). 

It is remarkable that diffeomorphisms and Weyl rescalings
can be extended to an infinite-dimensional algebra of symmetries
that acts on an infinite multiplet of conformal higher spin fields
\cite{Fradkin:1989yd,Segal:2002gd}, which are also known as
Fradkin--Tseytlin fields \cite{Fradkin:1985am}. The multiplet
contains fields with arbitrarily high spin. Conformal higher spin
gravities are theories for this multiplet and there seems to be only
one such theory for every multiplet in even dimensions $n=2p$, $p\geq2$.
Two seemingly different constructions were proposed: Tseytlin's that is based on the effective action approach
\cite{Tseytlin:2002gz}(see also \cite{Bekaert:2010ky});
Segal's that is tight to a worldline model and deformation
quantization \cite{Segal:2002gd}. Both of these approaches
are eventually closely related to each other and prove the existence
of CHS gravities. 

One aspect of CHS gravity and, more generally, of ``higher spin geometry''
we would like to improve on is to propose a manifestly covariant
and both coordinate- and background-independent construction
for CHS gravities, including the action principle. Having such
a formulation is important for any extension of gravity and should
facilitate the study of these theories in the future, e.g. propagation
of CHS fields on gravitational backgrounds
\cite{Nutma:2014pua, Grigoriev:2016bzl, Beccaria:2017nco, Kuzenko:2019ill, Kuzenko:2019eni, Kuzenko:2020jie, Kuzenko:2022hdv}.

The problem of covariantization of CHS gravity leads to,
roughly speaking, two problems:
(i) what is the proper higher spin analog of the covariant derivative $\nabla$?
(ii) what is the proper higher spin analog of $\sqrt{g}$ to be able
to integrate? None of these objects makes sense a priori when
higher spin gauge fields are present. Indeed, higher spin gauge
transformations, of which the spin-two symmetries (diffeomorphisms
and Weyl rescalings in our case) form a small subset, mix spins
and derivatives and, hence, neither of $\nabla$ and $\sqrt{g}$ transform
in a meaningful way. In a broader sense, the real question is
``what is higher spin geometry?''.

A key step to the solution was given already in \cite{Segal:2002gd}.
Spin-two (low-spin) symmetries such as diffeomorphisms,
Weyl and Yang--Mills symmetries, can be represented by operators
of the first order at most (in spacetime derivatives). Higher spin
transformations bring in higher derivatives. Therefore, the natural
language is that of differential operators. The latter can also
be represented locally as the Moyal--Weyl star-product algebra.
The action of CHS gravity is then a specific invariant functional
on the Moyal--Weyl algebra, which takes advantage of the invariant trace $\Tr(\bullet)$.

Moyal--Weyl star-product is defined in terms of Darboux coordinates and does not support diffeomorphism symmetry.\footnote{In the sense that its very definition on a symplectic
manifold relies on a choice of coordinates --- called Darboux coordinates ---
wherein the symplectic form has constant components, and hence the star-product is defined only locally.} It known how to fix this problem within
the framework of deformation quantization --- one needs
to resort to the Fedosov approach \cite{Fedosov:1994zz},
which is  based on picking a background connection.
From the gauge field theory perspective, resorting to
Fedosov quantization amounts to the so-called parent
reformulation of the system. In the case of the Segal system,
the corresponding parent reformulation is known~\cite{Grigoriev:2006tt}
(see also \cite{Barnich:2004cr,Barnich:2010sw}; strictly speaking
we employ a certain partially gauge-fixed version of this formulation)
and its equations of motion are those of the Fedosov-like
quantization of the corresponding constrained system defined
on the cotangent bundle of the space-time manifold. 
The crucial point of this parent system is that it is
background-independent because the Fedosov-like connection
becomes a genuine gauge field and hence no background fields
are needed in the construction. As a result, we get an off-shell
gauge theory that contains the CHS multiplet together with
all the necessary auxiliary and pure gauge fields that encode
derivatives thereof in a higher spin covariant way. 

The last ingredient is Feigin--Felder--Shoikhet cocycle \cite{Feigin2005}
that allows one to define the invariant trace $\Tr(\bullet)$
on the Fedosov-quantized symplectic manifold.
This cocycle is a by-product of
Shoikhet--Tsygan--Kontsevich formality \cite{Tsygan1999, Shoikhet:2000gw}.
As a result, Segal's action, i.e. CHS gravity action in Darboux coordinates,
is a particular gauge of our action and one can choose other gauges that make various aspects manifest, e.g. one can pursue metric-like or frame-like approaches. 

It was shown in \cite{Grady:2015ica} that the Feigin--Felder--Shoikhet
cocycle can be represented as a specific correlator in a topological
quantum mechanics of a free particle on the circle. At the same time, Segal's action also admits a worldline
formulation \cite{Segal:2001di, Segal:2002gd, Bonezzi:2017mwr}. Therefore,
it should not be surprising that our covariant CHS gravity action can be
represented as a correlation function in a worldline model. 

The outline of the paper is as follows.
In Section \bref{sec:CHSreview}, we introduce conformal higher spin
fields, Fradkin--Tseytlin fields, and review the two approaches
to CHS gravities. In Section \bref{sec:covariant-action}, we introduce the parent extension of the Segal off-shell system and present
our main result --- a covariant action for CHS gravities. In Section~\bref{sec:gauges}, we discuss possible gauge conditions which allow us to reformulate the action in terms of the independent and unconstrained fields. We also elaborate on the explicit relation between frame-like and metric-like formulations of the system, which are derived from the parent formulation through suitable gauge conditions.
We end up with some conclusions and a discussion. 

\section{Conformal higher spin gravity}\label{sec:CHSreview}
We begin by outlining the problem of conformal higher spin gravity
and a closely related issue of a higher spin extension of conformal
geometry. Next, we review two approaches, Tseytlin's and Segal's,
to CHS gravity. At the end, we discuss the relation between
the two. While each of the approaches gives a `proof of concept’
for the existence of theory, we will see that obtaining a covariant
and globally well-defined form of it may not be straightforward,
and will propose a solution to this problem in section 
\bref{sec:covariant-action}.

\subsection{Conformal higher spin fields}
As previously stated, conformal higher spin gravity is an extension
of conformal, or Weyl, gravity, whose spectrum contains fields
of all integer spins. One way to describe Weyl gravity is in terms
of an equivalence class of conformal metrics, i.e. a rank-$2$ symmetric
tensor $g^{\mu\nu}$, subject to the gauge transformations
\begin{align}
    \label{spin2conf}
    \delta_{\xi,\sigma} g^{\mu\nu}= \mathcal{L}_\xi g^{\mu\nu}
    + 2\sigma g^{\mu\nu}\,,
\end{align}
where $\xi\equiv \xi^\mu(x)\,\partial_\mu$ is a vector field that generates infinitesimal diffeomorphism and $\sigma(x)$ is an arbitrary function parameterizing infinitesimal Weyl rescalings \eqref{eq:Weyl_rescaling}.

Conformal higher spin fields in the metric-like approach
\cite{Fradkin:1985am} are a natural generalization of the above
gauge symmetry to totally-symmetric tensors of arbitrary ranks $s>2$. More precisely,
conformal higher spin fields can be described by symmetric tensors
$\Phi^{\mu_1 \dots \mu_s}$ with $s \in \N$. The theory turns out
to be uniquely fixed by its (higher spin) non-abelian gauge
symmetries whose exact form can be read off from a simple matter
coupling and is reviewed in Section \bref{sec:Segal}. To envisage
the theory, it can be helpful to look for natural gauge symmetries
for $\Phi^{\mu_1 \dots \mu_s}$. To this effect, it is important
to isolate the spin-two field $g^{\mu\nu}$ and treat it
as a background. A higher spin extension of \eqref{spin2conf}
can be looked for, starting from
\begin{align}\label{proposal}
    \delta_{\xi,\sigma} \Phi^{\mu_1 \dots \mu_s}
    = \mathcal{L}_\xi \Phi^{\mu_1 \dots \mu_s}
    + w_s\,\sigma\,\Phi^{\mu_1 \dots \mu_s}
    + \nabla^{(\mu_1}\xi^{\mu_2 \dots \mu_s)}
    + g^{(\mu_1\mu_2}\sigma^{\mu_3 \dots \mu_s)} + \dots\,.
\end{align}
Here, the first term, the Lie derivative, declares
$\Phi^{\mu_1 \dots \mu_s}$ to transform as a tensor under diffeomorphisms
and the second term assigns a certain Weyl weight $w_s$ to it;
the third term is a generalization of diffeomorphisms to higher spins
(at this point the spin-two background is important and we use 
$\mathcal{L}_\xi g^{\mu\nu}=\nabla^\mu \xi^\nu+\nabla^\nu \xi^\mu$
to propose a higher spin extension); the fourth term represents
a higher spin Weyl transformation, but again over the spin-two background;
the dots $\ldots$ at the end are meant as a reminder that this expression needs
to be completed with terms that are nonlinear in higher spin fields
themselves. 

Upon linearization around flat space, the diffeomorphisms
get reduced to the Poincar\'e symmetries and the higher spin gauge
transformations become\footnote{This type of conformal gauge fields was first introduced in $4d$ \cite{Howe:1981qj} as sources to traceless conserved tensors.}
\begin{align}\label{usualFTtraceful}
    \delta_{\xi,\sigma} \Phi_{a_1 \dots a_s}
    & = \pl_{(a_1}\xi_{a_2 \dots a_s)}
    +\eta_{(a_1a_2} \sigma_{a_3 \dots a_s)}\,,
\end{align}
where $\eta_{ab}$ is the flat metric.\footnote{We use Greek indices
$\mu, \nu, \dots$ to denote (co)tangent indices on a (generally)
curved manifold and Latin indices $a,b,\dots$ refer to flat space
or fiber indices.} A free action for such fields
is simply given by\footnote{Generalizations to supersymmetric cases \cite{Fradkin:1985am, Fradkin:1989md, Fradkin:1990ps, Kuzenko:2017ujh, Kuzenko:2019ill, Kuzenko:2020jie, Kuzenko:2021pqm,Kuzenko:2022hdv,Kuzenko:2022qeq} and mixed-symmetry fields \cite{Vasiliev:2009ck} are also possible, but will be studied elsewhere. In the present paper, we deal with the most basic example of the bosonic CHS gravity with only totally symmetric fields in the spectrum.}
\begin{equation}\label{chsaction1}
    S_s[\Phi] = \int d^nx\,\Phi^{a_1 \dots a_s}\,
    P_{a_1 \dots a_s}^{b_1 \dots b_s}(\partial)\,
    \Box^{s+\frac{n-4}2}\,\Phi_{b_1 \dots b_s}\,,
\end{equation}
where $P^{b_1 \dots b_s}_{a_1 \dots a_s}(\partial)$ is a traceless
and transverse projector, thereby ensuring the invariance 
of this action under the linear gauge symmetries \eqref{usualFTtraceful}.
For example, for $s=1$ we have $P_a^b= \delta^b_a- \pl_a\pl^b/\square$.
The right power of $\square$ in \eqref{chsaction1} is to ensure locality
of the action. 
Upon integrating by part, this action can be brought to the form
\begin{equation}\label{chsaction2}
    S_s[\Phi] = (-1)^s\,\int d^nx\,C^{a_1 \dots a_s, b_1 \dots b_s}\,
    \Box^{\frac{n-4}2}\,C_{a_1 \dots a_s, b_1 \dots b_s}\,,
\end{equation}
where $C_{a_1 \dots a_s, b_1 \dots b_s}$ is the (linearised)
spin-$s$ Weyl tensor, which is defined as the projection of 
$\partial_{a_1} \dots \partial_{a_s} \Phi_{b_1 \dots b_s}$
onto the traceless rectangular Young diagram
$\parbox{25pt}{\Yboxdim{5pt} \tiny \gyoung(_5,_5)}$ of length $s$.
For spin-$2$, this reproduces the usual linearised Weyl tensor
and hence the linearised action of Weyl gravity.

One can proceed starting from the free actions \eqref{chsaction1} or
\eqref{chsaction2} and look for cubic and higher interaction vertices
while deforming the free gauge symmetry \eqref{usualFTtraceful}
accordingly, which is often called Noether procedure (or gauging in supergravity).
While this approach should, in principle, allow one to address the problem
of constructing CHS gravity in a systematic way, it is notoriously
difficult in practice, see \cite{Fradkin:1989md, Fradkin:1990ps,Metsaev:2016rpa,Kuzenko:2019eni,Kuzenko:2020jie,Kuzenko:2022hdv} for some results in this direction. 

In general, the introduction of interactions is strongly constrained by
the requirement of Weyl invariance, and in particular,
a simple dimensional argument shows that all possible vertices
compatible with conformal symmetry and involving a finite number of fields will contain a finite number
of derivatives, and hence the resulting theory will be local. Indeed,
within the perturbative expansion over flat space the conformal dimension
of $\Phi_{a_1 \dots a_s}$ is $(2-s)$ and an interaction vertex for
$k$ fields with spins $s_i$ schematically reads
\begin{align}
    S_k \sim \int d^nx\, (\partial)^p\Phi^{(s_1)} \ldots \Phi^{(s_k)}\,.
\end{align}
Therefore, it has to have a fixed number of derivatives $p=n+\sum_i (s_i-2)$.
Unfortunately, the price to pay for locality is that CHS gravity
is not unitary (as could be expected from the fact
that its kinetic terms are of higher derivative type),
which is true already for its spin-two subsector, Weyl gravity.
Nevertheless, it is an interesting example of higher spin gravity
that has applications within AdS/CFT duality
\cite{Liu:1998bu, Metsaev:2009ym, Bekaert:2012vt, Giombi:2013yva, Bekaert:2013zya, Chekmenev:2015kzf, Beccaria:2016tqy, Bekaert:2017bpy} 
and conformal geometry, 
since it produces many conformally invariant operators.
The S-matrix has good chances to be $1$ due to the higher spin
symmetry \cite{Joung:2015eny, Beccaria:2016syk} and, hence,
the theory may turn out to be unitary/trivial, which is
a sign of integrability.

Leaving aside the question of classifying possible CHS gravities,
there are two concise recipes to construct specific examples
of such theories that we are going to review now. These are Tseytlin's
approach \cite{Tseytlin:2002gz} that is based on the idea
of effective action and Segal's approach \cite{Segal:2002gd}
that draws inspiration from studying the quantized particle model
coupled to background higher spin fields. 

\subsection{Tseytlin's approach: induced action}
The approach due to Tseytlin \cite{Tseytlin:2002gz}
(see also~\cite{Bekaert:2010ky} for a more elaborated
exposition) rests on the idea of induced actions, already
discussed by Sakharov \cite{Sakharov:1967pk} for gravity.
It consists in considering a (complex) conformal scalar field
$\phi$ with (higher spin) currents $J_{a_1 \dots a_s}$ coupled
to a background of conformal higher spin fields, i.e.
\begin{equation}\label{coupledscalar}
    S_h[\phi] = \int d^nx\,\big(\phi^*\,\Box\,\phi
    + \sum_{s=0}^\infty\,J_{a_1 \dots a_s}\,h^{a_1 \dots a_s}\big)
    = \int d^nx\,\phi^*\,\big(\Box + \hat H\big)\,\phi\,.
\end{equation}
Here, $J_{a_1 \dots a_s}$ are bilinear in the scalar field $\phi$
and read schematically
\begin{align}
    J_{a_1 \dots a_s} & = \phi^* \pl_{a_1} \dots 
    \pl_{a_s} \phi + \ldots\,,
\end{align}
where $\ldots$ complete it to a traceless (can be achieved
off-shell) and conserved (on-shell) tensor. The fields
$h^{a_1 \dots a_s}$ can be treated as sources for these currents
or, and this is an interpretation we need, as background fields.
In the last step, we took the liberty to integrate by parts
so that no derivatives act on $\phi^*$ and all the derivatives
involved in the definition of the currents are distributed
over $\phi$ and $h^{a_1 \dots a_s}$. As a result of
this resummation, $\hat H$ is the (higher order) differential
operator acting on $\phi$ 
\begin{align}
    \hat H &= \sum_k \hat H^{a_1 \dots a_k}(h)\,
    \pl_{a_1} \dots \pl_{a_s}\,,
\end{align}
whose coefficients depend on the original sources
$h^{a_1 \dots a_s}$. Let us note for the future that
$\hat H=\hat H (x,\pl)$ is, basically, a generic formally
Hermitian\footnote{By `formally' Hermitian here we simply
mean that $F^\dag=F$, where $f(x)^\dag=f^*(x)$, $\pl_a^\dag=-\pl_a$ and $\dag$
is an anti-involution, $(FG)^\dag= G^\dag F^\dag$.}
differential operator. Since the higher spin currents are conserved
and traceless the background fields $h^{a_1 \dots a_s}$ enjoy
the same symmetry as \eqref{usualFTtraceful}. It is important
to stress that the simple Noether coupling \eqref{coupledscalar}
is not off-shell gauge invariant under \eqref{usualFTtraceful}
and, as usual, requires higher order corrections to the gauge
symmetry and to the Noether coupling (so-called Seagull terms). In the simplest
low spin cases, the complete coupling for the spin-one background
field $A_\mu$ originates from $D_\mu \phi^* D^\mu \phi$
with $D_\mu=\pl_\mu +i A_\mu$ and for the spin-two background,
i.e. the conformal metric itself, the stress-tensor coupling
$T_{ab} h^{ab}$ has to be appended with infinitely many terms
that can be resummed into the action of the conformally coupled
scalar field
\begin{align}
    S[\phi] & = \int d^nx\,\sqrt{g}\,
    (g^{\mu\nu}\pl_\mu \phi^* \pl_\nu\phi
    + \tfrac{(n-2)}{4(n-1)}\phi^* R \phi)\,.
\end{align}

Suppose we are given an action $S_h[\phi]$ for
the conformal scalar field $\phi$ coupled to an arbitrary
conformal higher spin background (collectively denoted by $h$).
One can then compute the effective action
\begin{equation}
    e^{-W[h]} = \int\cD\phi^*  \cD\phi\,e^{-S_h[\phi]}\,,
\end{equation}
using the heat kernel method
\begin{equation}
    W[h] = -\int_\epsilon^\infty \tfrac{dt}t\,\Tr\,e^{-t\hat F}\,,
    \qquad
    \hat F := \Box+\hat H\,,
\end{equation}
where $\epsilon$ is a cut-off. Using the heat kernel expansion,
the effective action can be written as
\begin{equation}
    W[h] = (\text{poles in $\epsilon$})
    + a_{n/2}[\hat F]\,\log\epsilon
    + (\text{series in $\epsilon$})\,,
\end{equation}
where the Seeley--DeWitt coefficient $a_{n/2}[\hat F]$
of the operator $\hat F=\hat F[h]$ appears only in even
dimension $n$. The pole part represents the usual UV divergences. 
The coefficient $a_{n/2}$ is a local functional of the conformal
higher spin fields $h$ 
and is Weyl invariant. As such it has all the desired properties
that one would ask of an action for CHS gravity.
In other words, the CHS gravity action in Tseytlin's
approach is defined as the logarithmically divergent piece
of the effective action of a scalar field conformally coupled
to a background of CHS fields. It is also well-known
that for the low spin background fields $A_\mu$ and $g^{\mu\nu}$
in, say $n=4$, we have \cite[Sec. 3]{Beccaria:2017nco}
\begin{align}
    a_{(n=4)/2} & = \tfrac{1}{(4\pi)^2}\int d^4x\, \sqrt{g}
    \left(-\tfrac1{12} F_{\mu\nu}F^{\mu\nu}
    + \tfrac{1}{120} C_{\mu\nu,\lambda\rho}C^{\mu\nu,\lambda\rho}\right)\,,
\end{align}
where $C_{\mu\nu,\lambda\rho}$ is the Weyl tensor, 
and the topological Euler term was dropped. Therefore,
it should not be too surprising that $a_{n/2}$ receives well-defined,
local higher spin corrections as long as we switch on the higher spin
background fields. One subtlety can be that $\hat F$ is a higher derivative
operator (possibly of infinite order) and the Heat kernel techniques
for general higher derivative operators are yet to be developed
(see e.g. \cite{Barvinsky:2019spa, Barvinsky:2021ijq} for recent works
on heat kernels for higher-order operators). The induced action approach should be generalizable to mixed-symmetry and supersymmetric cases. In particular, it has recently been advanced to $\mathcal{N}=1$ supersymmetric conformal higher spin case \cite{Kuzenko:2022hdv,Kuzenko:2022qeq} and the quadratic part of the action has been obtained via the induced action technique \cite{Kuzenko:2022qeq}.

\subsection{Segal's approach: particle in a higher spin background}
\label{sec:Segal}
Suppose we are given the quantized phase-space, i.e. an associative
algebra of functions on the phase space $\R^{2n}$ with linear coordinates
$x^\mu$, $p_\nu$, subject to $\pb{x^\mu}{p_\nu}=\delta^\mu_\nu$. 
Here and in what follows we employ the language of Weyl symbols,
i.e. identify the operator algebra of polynomial functions $p$ with coefficients in smooth functions in $x$ 
tensored with $\R[[\hbar]]$ (formal power series in $\hbar$)
and with the product being Moyal--Weyl star-product, 
\begin{equation}\label{mw}
    (f \star g)(x,p) = f(x,p)\,\exp\Big[\tfrac\hbar2\,
    (\tfrac{\overset{\leftarrow}{\partial}}{\partial x^\mu}\,
    \tfrac{\overset{\rightarrow}{\partial}}{\partial p_\mu}
    -\tfrac{\overset{\leftarrow}{\partial}}{\partial p_\mu}\,
    \tfrac{\overset{\rightarrow}{\partial}}{\partial x^\mu})\Big]\,g(x,p)\,,
\end{equation}
so that in particular, the commutation relations between
the coordinates $x^\mu$ and $p_\nu$ read
\begin{equation}
\label{xp}
    [x^\mu,p_\nu]_\star = \hbar\,\delta^\mu_\nu\,.
\end{equation}
This algebra admits an anti-involution defined by
\begin{equation}
    x^\dagger = x\,,
    \qquad
    p^\dagger = p\,,
    \qquad
    \hbar^\dagger = -\hbar\,, \qquad
    (a \star b)^\dagger = b^\dagger \star a^\dagger\,.
\end{equation}
We also employ the standard representation of the algebra
on functions in $x^\mu$ tensored with $\R[[\hbar]]$. More precisely,
any $f(x,p)$ is sent to a differential operator $\hat f(x,\dl{x})$
that is obtained by employing the symmetric ordering. This gives
a quantization map. Its inverse, sending 
operators to phase space functions is usually referred to as the symbol map.
Under the quantization map, the involution $\dagger$ is sent to
the formal adjoint with respect to the standard inner product
\begin{align}
    \inner{\phi}{\psi} = \int d^nx \,\phi^*\,\psi\,.
\end{align}
The above construction is invariant under general linear transformation
of $x$ complemented by the conjugate transformations of $p$.
In particular, `wave-functions' $\phi$, $\psi$ are assumed to be
semi-densities for the inner product to be invariant under
such transformations. Note that more general diffeomorphisms
cannot be naturally implemented in this framework since, for instance,
they do not preserve Moyal--Weyl star-product \eqref{mw}.

Let $F(x,p)$ be a generic element of the algebra which we view
as a first class constraint (we do not explicitly indicate the $\hbar$
dependency, having in mind that we work over $\R[[\hbar]]$). 
Assuming $F$ Hermitian and polynomial in $p_a$, one can easily
write the free action of the associated scalar field as
\begin{equation}
\label{scalarcoupled}
    S[\phi,F] = \inner{\phi}{\hat F \phi}\,.
\end{equation}
In this action, we view $F$ as a generating function for background
fields, and $\phi=\phi(x)$ as a complex scalar field. Indeed,
the relation to the previous section is that in the action
\eqref{scalarcoupled} we can absorb $\square$ into
$\hat{F}=\square+\hat{H}$, i.e. $\eta^{\mu\nu}\,p_\mu p_\nu$
is a (spin-two) background value for $F$ and the perturbations
correspond to turning on a conformal higher spin background.
In particular, it is more natural to associate the Taylor
coefficients of $F$ when expanded in $p$
\begin{align}
    F &= \sum_s F^{\mu_1 \dots \mu_s}(x)\,
    p_{\mu_1} \dots p_{\mu_s}\,,
\end{align}
with the background fields (recall that they can be obtained
via a certain rearrangement of $h^{a_1 \dots a_s}$).

The advantage of the action \eqref{scalarcoupled} is that
it is very easy to determine the full gauge symmetry that leaves it
invariant. Indeed, it is obvious that it is invariant under
the following infinitesimal gauge symmetries,
\begin{subequations}
\label{Segalgauge}    
\begin{align}\label{Segaloffshell}
    \delta_{\xi,w} F & = \frac{1}{\hbar}\,[F,\xi]_\star
    + \{F,w\}_\star\,, \\
     \delta_{\xi,w} \phi &= -\left(\tfrac{1}{\hbar}\,\hat{\xi}
    +\hat w\right)\phi\,,
\end{align}
\end{subequations}
where the gauge parameters $\xi$ and $w$ are Weyl symbols of the Hermitian operators: $\hat\xi^\dagger=\hat\xi$, $\hat w^\dagger =\hat w$ and where ``hat'' denotes the quantization map. These symmetries have a natural interpretation in terms of
the Hamiltonian constrained system describing the underlying particle model, see Appendix~\bref{app:BRST} for details. An even more compact way to represent the same gauge symmetry is to define $u=\hbar^{-1} \xi + w$, which is neither Hermitian nor anti-Hermitian, and write
\begin{align}\label{usymmetries}
    \delta_u F & = u^\dag \star F + F \star u\,,
    && \delta_u \phi = -\hat{u}\, \phi\,.
\end{align}
In what follows we refer to the off-shell system
\eqref{Segaloffshell} with field $F$ subject to
the above gauge symmetries as the {\it off-shell Segal system}.
In other words, we drop the $\phi$-part. The off-shell Segal
system defines a completion of \eqref{proposal}
and also solves the problem raised in the previous section,
of how to couple matter fields to a higher spin background. 

Let us comment on the background independence of the system.
At first glance, the definition of the system does not involve
any background fields and hence the system can be considered as
a gravity-type theory (for the moment defined at the off-shell
level only). Nevertheless, a more careful analysis shows
that the star-product involved in the construction 
depends on a choice of Darboux coordinates. Although
this background dependence is not of a fundamental nature
and can, in principle, be excluded by working with differential
operators as such, it is not clear if this does not lead
to further complications. More geometrically,
what we are implicitly dealing with is a cotangent
bundle over the spacetime manifold $\manX$ and it is known that one, at least, needs
to fix an affine connection on $\manX$ to define a concrete
star-product in a coordinate invariant way. As we are going to see
in the next Section, this drawback can be cured by passing to
a locally equivalent reformulation of the system.

By analyzing the component form of the transformation
\eqref{Segaloffshell}, one finds that they generalize
gauge transformations of the conformal gravity and,
at the same time, give a nonlinear extension
of the linearized gauge symmetries of the Fradkin--Tseytlin
fields. Note however that to see that the linearization
of the above gauge symmetries indeed reproduces
the Fradkin--Tseytlin gauge transformations,
one is to employ an intricate field redefinition,
introduced in~\cite{Segal:2002gd} (see also 
\cite{Grigoriev:2006tt,Bekaert:2010ky}). 
Let us give a few sketchy arguments supporting these statements. 
First of all, we note that the above gauge transformations
contain a subalgebra of diffeomorphisms and Weyl rescalings 
which act on the spin-2 component of $F$ in a standard way,
but at the same time affects other components of $F$.
More precisely, consider transformations with the parameters
\begin{align}\label{conformalpart}
    \xi&= \xi^\mu(x) p_\mu\,,  & w&=\sigma(x)\,.
\end{align}
The commutator of two such transformations, with
$u_i=\tfrac1\hbar\,\xi_i^\mu(x) p_\mu +\sigma_i(x)$
for $i=1,2$, reads as
\begin{align}\label{diffalgebra}
    [u_1, u_2]_\star &= \tfrac1\hbar\,
    (\xi_2^\mu \pl_\mu \xi_1^\nu- \xi_1^\mu \pl_\mu \xi_2^\nu) p_\nu
    + \xi_2^\mu \pl_\mu \sigma_1-\xi_1^\mu \pl_\mu \sigma_2\,,
\end{align}
and hence this subalgebra is isomorphic to the semidirect product
of the algebra of diffeomorphisms (represented by vector fields)
with the (abelian) algebra of Weyl rescalings. However,
this subalgebra is not represented in a canonical way
on the $p$-Taylor coefficients of $F$ as one can see
this already with the first even-spin components of the field $F$
\begin{align}
    F & = \bloody(x) + \tfrac12\,g^{\mu\nu}(x)\, p_\mu p_\nu
    + \dots\,,
\end{align}
where notation $g^{\mu\nu}$ will justify itself immediately.
Here we assume that the higher spin components vanish otherwise
there can be higher derivative contributions below. The subalgebra
\eqref{conformalpart} of the gauge transformations, which consists
of diffeomorphisms and Weyl rescalings, acts as 
\begin{align}\label{metrictr}
    \delta_{\xi,\sigma} g^{\mu \nu} & = -\pl_\lambda \xi^\mu g^{\lambda \nu}
    - \pl_\lambda \xi^\nu g^{\lambda\mu}
    + \xi^\lambda \pl_\lambda g^{\mu \nu}
    + 2\sigma g^{\mu \nu}\equiv \mathcal{L}_\xi g^{\mu\nu}
    + 2\sigma g^{\mu\nu}\,,\\ \label{entanglGauge}
    \delta_{\xi,\sigma} \bloody & = \xi^\mu \pl_\mu \bloody + 2\sigma \bloody
    - \frac12\hbar^2 \pl_\mu\pl_\nu \xi^\lambda \pl_\lambda g^{\mu \nu}
    - \frac12 \hbar^2g^{\mu \nu}\pl_\mu \pl_\nu \sigma\,.
\end{align}
We see that $g^{\mu\nu}$ transforms as a conformal metric should,
but $\bloody$ has additional non-covariant terms (the last two
terms, proportional to $\hbar^2$)
that get $\bloody$ entangled with $g^{\mu\nu}$. In principle,
there is a field-redefinition $\bloody\rightarrow \bloody+ Y(g)$
that allows one to eliminate such terms \cite{Segal:2002gd}.
However, it is, generally, difficult to find one and it becomes more and
more complicated to disentangle transformations as we proceed
to higher spins. While some formulas, e.g. \eqref{diffalgebra}
and \eqref{metrictr}, indicate that the diffeomorphism algebra
is a part of the off-shell Segal system, others such as \eqref{mw}
and \eqref{entanglGauge} make it very difficult to see
how it is realized on fields.\footnote{Retrospectively, 
this is consistent with the fact that we are dealing with
symbols of differential operator, for which it is known
that only the principal part (coefficients of its term
of highest order in derivatives) transforms as a totally
symmetric tensor (see e.g. \cite{Bekaert:2021sfc}
for more details on this approach in relation to
higher spin gravity).} This is one of the problems
that the present paper is to solve: how to covariantize
the Segal approach.\footnote{For the moment we only talk
about the off-shell Segal system, whose covariantization
is known \cite{Grigoriev:2006tt,Grigoriev:2016bzl}.}

Genuine higher spin gauge transformations \eqref{usualFTtraceful}
can be seen once we expand \eqref{Segalgauge} over the background
$F^{(0)}=\tfrac12\,p^2 \equiv \tfrac12\,\eta^{\mu\nu}\,p_\mu p_\nu$. 
Indeed, for $F = \tfrac12\,p^2 + f$,
the linearized gauge transformations read
\begin{align}
    \delta_{\xi,w} f & = \tfrac{1}{2\hbar}\,[p^2,\xi]_\star
    + \tfrac12\,\{p^2,w\}_\star\,,
\end{align}
and for generic $\xi$ and $\sigma$ we find
\begin{align}
    \delta_{\xi,w} f(x;p) & = -p^\mu \pl_\mu\xi(x;p)
    + p^2\,\sigma(x;p) + \ldots\,,
\end{align}
where $\ldots$ denotes corrections of order $\mathcal{O}(\hbar^2)$
which are also responsible for mixing the fields of different spins
and for the non-covariance. When Taylor expanded in $p$ the above formula
reproduces \eqref{usualFTtraceful} to the leading order in $\hbar$.
This way $\xi$ and $w$ represent {\it higher spin (HS) diffeomorphisms}
and {\it higher spin (HS) Weyl transformations}.

To summarize, the algebra of diffeomorphisms and Weyl rescalings
is obviously a part of the gauge symmetries of the off-shell Segal
system. In addition, the linearization over $\tfrac12\,p^2$
does reproduce the higher spin transformations \eqref{usualFTtraceful}.
Therefore, altogether we are in a possession of a certain completion
of \eqref{proposal}. Nevertheless, the diffeomorphisms and Weyl transformations are not
represented in the canonical way and the Taylor coefficients
of $F$ do not behave as tensors under diffeomorphisms. 

The off-shell Segal system can be expanded over any background,
the simplest one being $\tfrac12\,p^2$, which is also
the one making \eqref{scalarcoupled} into the action of
a conformal scalar field.

Given a fixed background solution, a natural question is
to consider an algebra of its symmetries.
By definition these are the leftovers of gauge symmetries
that leave it intact:
\begin{align}
    u^\dag \star p^2 + p^2 \star u & = 0\,.
\end{align}
We also note that the symmetries \eqref{usymmetries} are
reducible: $\delta_u F=0$ for $u=i v \star F$ with $v^\dag=v$.
Therefore, the symmetry algebra of the vacuum $\tfrac12\,p^2$
is exactly the algebra of `higher symmetries of Laplacian'
\cite{Nikitin1991,Eastwood:2002su,Segal:2002gd},
i.e. the quotient algebra of the symmetries modulo trivial ones.
At the same time, this algebra is the deformation quantization
\cite{Eastwood:2002su,Michel2014, Joung:2014qya} of the coadjoint
orbit corresponding to the free scalar field in flat space,
$\square \phi=0$, as a representation of the conformal algebra
$\mathfrak{so}(n,2)$. This is an infinite-dimensional associative algebra
$\hs(\square\phi)$ that can also be obtained as a quotient
of $\cU(\mathfrak{so}(n,2))$ by a certain two-sided ideal, known as
the Joseph ideal \cite{Joseph1976} (see also
\cite{Fernando:2015tiu, Gunaydin:2016bqx} and references therein
for further applications to the construction of various higher algebras). 

A simple generalization of this construction is to take the same
scalar matter $\phi(x)$, but with a different kinetic term,
$\square^k \phi$, which are called higher order singletons
\cite{Bekaert:2013zya}. Obviously, they correspond to
$(p^2)^k$ within the Segal construction. The corresponding
higher spin algebras $\hs(\square^k\phi)$ are not isomorphic
for different values of the integer $k$
\cite{Eastwood2008, Gover2009, Alkalaev:2014nsa, Michel2014, Joung:2015jza, Brust:2016zns},
the theories have different spectra of higher spin currents and,
hence, the background fields they couple to. Therefore,
one can have an access to a larger set of CHS gravities
by picking different vacua. It is a vital point of Segal's 
construction that even though most of the discussion does not
have to refer to any particular vacuum,
it is necessary to make a choice at some point since different 
vacua correspond to different spectra of CHS fields,
i.e. to essentially different theories.

As we discussed above, the action for the background fields
can be obtained as a log-divergent piece in the effective action
of a conformal scalar field coupled to the higher spin background.
Here, we recall that within the approach of Segal the action
is defined as
\begin{align}\label{segal-action}
    S[F] & = \int d^nx\, \cL_x(F)\,,
    && \cL_x(F) = \int d^np\, \cL_{x,p}(F)\,, 
\end{align}
where $\cL_x(F)$ is the Lagrangian that can be proved
to be a local function of the $p$-Taylor components of $F$
and $x$-derivatives thereof. The Lagrangian $\cL_x(F)$ results
from evaluating an integral over $p$ for an auxiliary Lagrangian
density $\cL_{x,p}(F)$ on the phase-space. In other words,
\begin{align}
    S[F] &= \tr\, \cL_{x,p}(F) =\int d^nx\, d^np\, \cL_{x,p}(F)\,, 
\end{align}
where $\tr{}$ denotes the usual trace in the Weyl quantization given by the phase-space integral of the respective symbol. The phase-space Lagrangian $\cL_{x,p}(F)$ is a specific star-product function satisfying
\begin{align}
    \cL^\prime_{x,p}(F) \star F & = 0 = F \star \cL^\prime_{x,p}(F)\,,
\end{align}
for any function $F$. It is easy to see that this property indeed guarantees
gauge invariance of~\eqref{segal-action}.  At the formal
level this $\cL_{x,p}$ function can be defined as 
the star-product Heaviside step-function,
\begin{align}
    \cL_{x,p}(F) & = \Theta_\star(F)\,.
\end{align}
While some care is needed to work with such an object,
see \cite{Segal:2002gd} and Appendix \bref{app:starHeaviside},
it can be shown to be well-defined.\footnote{For example,
there is also a sort of regularization involved into the definition
of $\Theta_\star(F)$ and one of the crucial steps is to pick
the vacuum $F^{(0)}=\tfrac12\,p^2$ and treat all the other fields
as small perturbations (large perturbations can reach
a non-equivalent vacuum). Also, evaluation of the $p$-integral
requires Euclidean signature, but once $\cL_x(F)$ is computed,
the corresponding action is gauge invariant for any choice
of the signature.}

Let us sketch the proof of the gauge invariance of the action.
There are two types of gauge transformations involved
in \eqref{Segalgauge}: (a) adjoint transformations via commutator
$[F,\xi]_\star$ and (higher spin) Weyl transformations
via anti-commutator $\{F,w\}_\star$. The invariance
under (a) is manifest due to the use of star-product
and of the invariant trace $\tr$, i.e. its cyclic property
$\tr[f,g]_\star=0$. The invariance under (b) is what fixes
$\cL_{x,p}$ to be the star-product Heaviside function,
since it requires
\begin{align}\notag
    \delta_w S & = \tr\,\delta_w\cL_{x,p}(F)
    = \tr\,\left(\cL_{x,p}'(F) \star \{F,w\}_\star\right)
    = 2\tr\,\left(\cL_{x,p}'(F) \star F \star w\right) = 0\,,
\end{align}
where we used the cyclicity of the trace. Similarly to
$\Theta'(x)x = \delta(x)x \equiv 0$, the last equality formally
implies $\cL_{x,p}(F)=\Theta_\star(F)+\text{const}$. The constant
does not give any contribution in the Segal case, but it does lead
to an `index' for the covariantized CHS gravity we discuss
in Section \bref{sec:covariant-action}.

\subsection{Tseytlin/Segal dictionary}
At first sight, it may seem that Tseytlin's and Segal's
constructions are completely unrelated. Indeed, computationally
this is true to an extent: in Tseytlin's approach one has
to extract the log-divergent pieces of one-loop Feynman
diagrams of matter fields with higher spin background fields
on external lines; in Segal's approach one is to expand
the star-product Heaviside step-function. Nevertheless,
one can give an argument \cite[Sec. 6.2]{Segal:2002gd}
(see also \cite[App. D]{Basile:2018eac}) for why the end result
has to be the same. 

Let us start from Tseytlin's definition,
\begin{equation}
    S[\hat F] = a_{\frac n2}[\hat F]\,,
\end{equation}
and re-write it in terms of the heat kernel for $\hat F$.
The latter admits a small $t$ expansion \cite{Bekaert:2010ky},
\begin{equation}
    \Tr\big(e^{-t\,\hat F}\big) = t^{-\frac n2}\,
    \sum_{k=0}^\infty\,t^k\,a_k[\hat F]\,,
\end{equation}
so that the CHS gravity action reads
\begin{equation}
    S[\hat F] = \tfrac1{2i\pi}\,
    \oint \tfrac{dt}t\,\Tr(e^{-t\,\hat F})\,,
\end{equation}
where the integral is over a closed contour including
the origin of the complex plane. Now recall that, given
two differential operators $\hat D_1$ and $\hat D_2$,
with symbols $D_1$ and $D_2$ respectively, the symbol
of their composition $\hat D_1 \circ \hat D_2$
is the star product of their symbols, $D_1 \star D_2$.
As a consequence, the symbol of $e^{-t\,\hat F}$
is given by the star-exponential of the symbol $F$
of $\hat F$, i.e.
\begin{equation}
    e_\star^{-t\,F} := \sum_{k=0}^\infty
    \tfrac{(-t)^k}{k!}\,F^{\star k}\,,
    \qquad
    F^{\star k} := \underbrace{F \star \dots \star F}_{k\,\text{times}}\,.
\end{equation}
On top of that, the trace of a differential operator
agrees with the trace of its symbol,
\begin{equation}
    \Tr(\hat D) = \tr(D)
\end{equation}
so that the CHS gravity action can be written as
\begin{equation}\label{eq:action_contour}
    S[\hat F] = \tfrac1{2i\pi}\,\oint\tfrac{dt}t\,
    \tr(e^{-t\,F}_\star)\,.
\end{equation}
Assuming that the trace of the star-exponential of $(-t\,F)$
is analytic in the $\Re(t)>0$ region of the complex $t$-plane,
we can deform the original closed contour around the origin
to a half-circle whose diameter consists of the line defined
by $\Re(t)=-\epsilon$ with $\epsilon\to0^+$
(see Fig. \ref{fig:contour}). Further assuming that
$\tr(e_\star^{-t\,F})$ goes to $0$ when
$\lvert t \rvert \to \infty$, the contribution of
the contour integral on the arc of the half-circle 
vanishes as the radius is sent to infinity, so that
the action \eqref{eq:action_contour} now reads
\begin{equation}
    S[\hat F] = \lim_{\epsilon\to0^+}\,
    \tfrac1{2i\pi}\,\int^{+\infty}_{-\infty}
    \tfrac{d\tau}{\tau-i\epsilon}\,\tr(e_\star^{i\tau\,F})\,.
\end{equation}
Finally, assuming that we can exchange the order of the trace
and the contour integral, the above action can be 
re-written as
\begin{equation}
    S[\hat F] = \tr\big(\Theta_\star(F)\big)\,,
\end{equation}
where
\begin{equation}
    \Theta_\star(a) := \lim_{\epsilon\to0^+}\,\tfrac1{2i\pi}\,
    \int_{-\infty}^{+\infty}\,\tfrac{d\tau}{\tau-i\epsilon}\,
    e_\star^{i\tau a}\,,
\end{equation}
is the Heavisde star-function, as introduced by Segal
\cite{Segal:2002gd}. Even though some of the steps followed
previously may seem a bit formal (in the sense that we had
to assume some analytical properties of $e_\star^{-t\,F}$
and its trace), Segal showed how to compute the quadratic
and cubic pieces of the above action with $F=\tfrac12\,p^2 + \cO(h)$,
where $h$ denote conformal higher spin fields (see Appendix
\bref{app:starHeaviside} for more details).

\begin{figure}[!ht]
    \center
    \begin{tikzpicture}
        \draw[help lines, color=gray!30, dashed] (-2.4,-2.4) grid (2.4,2.4);
        \draw[->,ultra thick] (-2.5,0)--(2.5,0);
        \draw[->,ultra thick] (0,-2.5)--(0,2.5);
        \node (second) at (-0.5,-0.5) {};
        \draw[thick, RoyalBlue] (0,0) circle (1cm);
          angle=15];
        \draw[->, thick, RoyalBlue] (0.70,-0.70) arc[radius=0.7cm,start angle=310,delta angle=12];
    \end{tikzpicture}
    \hspace{75pt}
    \begin{tikzpicture}
        \draw[help lines, color=gray!30, dashed] (-2.4,-2.4) grid (2.4,2.4);
        \draw[->,ultra thick] (-2.5,0)--(2.5,0);
        \draw[->,ultra thick] (0,-2.5)--(0,2.5);
        \draw[->, thick, RoyalBlue] (-0.5,2)--(-0.5,-2);
        \draw[->, thick, RoyalBlue] (-0.5,-2) arc (-90:90:2cm);
        \draw[->] (-0.5,0) -- (0.75,1.6);
        \node at (1.3,1.75) {$R \to \infty$};
    \end{tikzpicture}
    \caption{Deformation of the contour used to pick up
    the Seeley--DeWitt coefficient of the logarithmic
    divergence in the trace of the heat kernel.}
    \label{fig:contour}
\end{figure}
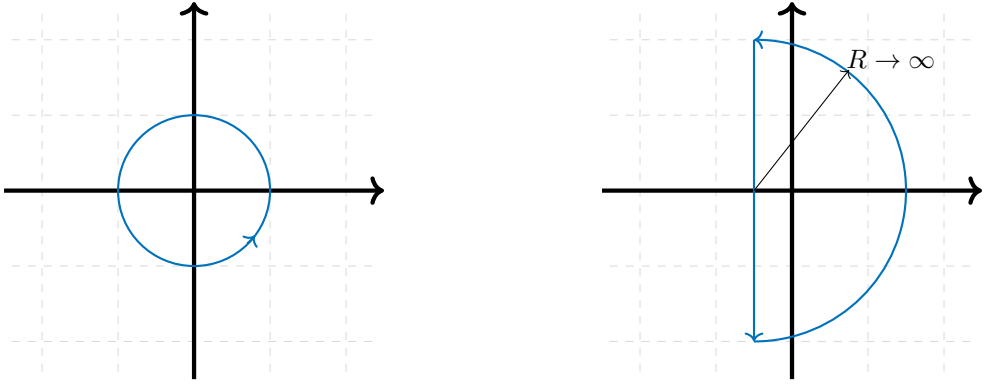

\section{Covariant action for conformal higher spin gravity}
\label{sec:covariant-action}
As it is clear from the review of the two approaches to CHS gravity
recalled in the previous section, there is no doubt that this class of theories
does exist and comprises well-defined local field theories. CHS gravity is,
first of all, a theory of gravity and, hence, it is important to have
the diffeomorphism/Weyl symmetry and higher spin extensions thereof
represented in a simple way. This is the goal of the present Section.

\subsection{Parent form of the off-shell Segal system}
\label{sec:parent_Segal}
The off-shell Segal system \eqref{Segalgauge} can be equivalently
reformulated in a 
coordinate-independent and globally
well-defined way with the help of a version of the Fedosov quantization
approach. From the field theory perspective, this amounts to 
what is known as the
parent reformulation~\cite{Barnich:2004cr,Grigoriev:2006tt,Barnich:2010sw}.
Its application to CHS gravity was already detailed
in \cite[App. A]{Grigoriev:2016bzl}.

The reformulation is constructed as follows.
We consider a formal version of the quantized phase space~\eqref{xp}.
This is obtained by taking Weyl algebra $\cA_{2n}$ of polynomials
in $p_a$ with coefficients in formal power series in $y^a$ with
the product being the Moyal--Weyl star-product, denoted $\ast$
so as to distinguish it from the previous section (as before
the algebra is considered as that over $\R[[\hbar]]$). 
Given an $n$-dimensional
space-time manifold $\manX$, consider a Weyl algebra bundle $W(\manX)$
over $\manX$, whose fiber at $x\in\manX$ is a Weyl algebra $\cA_{2n}$,
associated with $T_x\manX \oplus T^*_x\manX$. In other words , sections of this bundle are elements of 
$\Gamma\big(S(T\manX)\big) \otimes \Gamma\big(\hat S(T^*\manX)\big)$
where $S$ denote the symmetric algebra and $\hat S$
its completion.\footnote{{This distinction between the symmetric algebra of $T\manX$ and the completion of the symmetric algebra
of $T^*\manX$ reflects the fact that we are considering polynomials 
in $p$ (coordinates on the fibers of $T^*\manX$)
and formal power series in $y$ (coordinates on the fibers of $T^*\manX$).}} Consider then a $\cA_{2n}$-valued  connection $1$-form $A$ and a section $F$ of $W(\manX)$,
subject to the following
equations:\footnote{Note that a simple extension of this parent
system would be to require that the curvature of $A$ takes
values in the center of the Weyl algebra (i.e. is simply given by
a closed $2$-form) instead of vanishing, exactly like
in Fedosov quantization.}
\begin{align}
    \label{parentEOM}
    dA + \tfrac1{2\hbar}\,\qcommut{A}{A} & = 0\,,
    && dF + \tfrac1\hbar\,\qcommut{A}{F}=0\,,
\end{align}
and the following gauge symmetries
\begin{align}\label{gs}
    \delta_\xi A & = d\xi + \tfrac{1}{\hbar}\qcommut{A}{\xi}\,,
    & \delta_{\xi,w} F & =\tfrac{1}{\hbar}\qcommut{F}{\xi}+\scommut{F}{w}\,,
\end{align}
where $\xi$ and $w$ are sections of $W(\manX)$, i.e. locally they are zero-forms with values in $\cA_{2n}$, and $w$ satisfies
\begin{align}\label{covconstw}
    d w+\tfrac{1}{\hbar}\qcommut{A}{w}&=0\,.
\end{align}
Note that the gauge parameter $w$ is subject to a differential
constraint. In fact, it is equivalent to an algebraic one provided
that the term $dx^\mu e_\mu^a(x)p_a$ in $A$, which is linear in $p$ and $y$-independent, is determined by an invertible $e_\mu^a(x)$, as we assume below.
Indeed, this condition then allows one to uniquely express all
the components of $w$ in terms of $w|_{y=0}$, as is explained
in Section~\bref{sec:gauges}. We also assume that $g^{ab}(x)$ determining the term quadratic in $p$ and $y$-independent, $\half g^{ab}(x)p_a p_b$, in $F$
is invertible. 

The above system is obtained as a partial gauge fixing of the equivalent
(locally in the spacetime) representation of the off-shell Segal system
proposed in~\cite{Grigoriev:2006tt}. In particular, because the relevant
gauge transformation is algebraic and the gauge condition is strict,
the gauge fixing produces an equivalent system. Moreover, the above system
is also off-shell. The system \eqref{parentEOM}-\eqref{covconstw}
will be called {\it parent Segal system} and solves the problem
of covariantizing the off-shell Segal system \eqref{Segalgauge}. Note also that if one omits the gauge transformations with parameters $w$, equations \eqref{parentEOM} and \eqref{gs} are precisely the defining equations for the Fedosov-like connection and the lift of functions in the  version of the Fedosov quantization suitable for cotangent bundles. These equations were also discussed in~\cite{Vasiliev:2005zu,Grigoriev:2006tt} as an off-shell system for massless higher-spin fields.

It is easy to give an independent proof of the equivalence by employing
a special gauge where $A=dx^\mu \delta_\mu^a \, p_a$.\footnote{In particular, this implies picking a coordinate frame where $e^a_\mu=\delta^a_\mu$. In the Fedosov deformation quantization of general symplectic manifolds  the analogous gauge can be chosen locally and requires Darboux coordinates on the base manifold.} As such, this gauge is reachable locally,
see Section \bref{sec:gauges} and Appendix~\bref{app:Fedosov}
for details. We will also call it Segal's gauge, for it is easy
to make contact with the Segal approach of Section \bref{sec:Segal}.
Indeed, in this gauge
\begin{align}\label{darboux}
    F & = F_0(x+y,p)\,, 
    & \xi & = \xi_0(x+y,p)\,,
    & w & = w_0(x+y,p)\,,
\end{align}
and the residual gauge symmetry reproduces the off-shell Segal system
in terms of the initial data $F_0$, $\xi_0$ and $w_0$ (which are the $y$-independent piece of these $0$-forms). Note that $\xi$ is unconstrained
in \eqref{gs}, but the requirement to maintain Segal's gauge implies
$d\xi+\tfrac1\hbar\,[A,\xi]_\ast=0$, i.e. the parameters of the residual transformations are covariantly constant. 

Although equations \eqref{parentEOM} 
and \eqref{covconstw} involve space-time derivatives, they are equivalent to algebraic equations as they allow one
to reconstruct a solution from the initial data at $y=0$. Indeed,
as it is easy to observe in Segal's gauge, the initial data for $F$, $\xi$
and $w$ is given by arbitrary functions of $x$ and $p$. The fiber-wise
Moyal--Weyl star-product in $y-p$ space induces the one on $x$ and $p$,
and we can recover all formulas from Section \bref{sec:Segal}.

The notation suggests that $\xi$ is responsible for a covariantized version
of (higher spin) diffeomorphisms and $w$ is responsible for (higher spin)
Weyl transformations. A special feature  of this formulation is that the actual Segal gauge transformations
\eqref{Segalgauge} are associated with covariantly constant $\xi$ and $w$.
Parameters $w$ of (higher spin) Weyl symmetry are always constrained
by \eqref{covconstw} in order for the transformed $F$ to obey \eqref{parentEOM}.
Therefore, Segal gauge transformations should be associated with 
\begin{align}
    \delta_\xi A & =0\,,
    & \delta_{\xi,w} F&=\tfrac{1}{\hbar}\qcommut{F}{\xi}+\scommut{F}{w}\,,
\end{align}
where $\xi$ and $w$ obey
\begin{align}
    d\xi + \tfrac{1}{\hbar}\qcommut{A}{\xi} & = 0\,,
    & dw + \tfrac{1}{\hbar}\qcommut{A}{w} & = 0\,.
\end{align}
From the point of view of physical fields hidden in $A$ and $F$
(as the coefficients of their power series expansion in $y$ and $p$),
gauge transformations with unconstrained parameters $\xi$ represent
field redefinitions, which we discuss later in Section \bref{sec:gauges}. 
Further reduction of symmetries is possible if we consider
those that preserve a given background $A=A^{(0)}$ and $F=F^{(0)}$.
This way we recover the higher spin algebra $\hs(\square \phi)$
for $F^{(0)} = \tfrac12\,p^2 \equiv \tfrac12\,\eta^{ab}\,p_a p_b$
and $A^{(0)}=dx^\mu \delta^a_\mu \,p_a$. 

The advantage of the parent reformulation  is that it is globally well-defined for any space-time  manifold $\manX$.\footnote{Of course, the existence of Lorentzian metric imposes some restrictions on the topology of $\manX$.} In particular, it is manifestly coordinate independent and does not require any predefined background geometrical structures. 
In order to illustrate this property, let us mention that the system encodes a particular star product on $T^*\manX$, which is still determined by an affine connection on $\manX$ (along with extra structures, in general) hidden  in the field $A$, but now $A$ is a part of the field content and not a  predefined background field.

The parent Segal system allows us
to parameterize covariant derivatives of the physical fields hidden
in the initial data $F\rvert_{y=0}$ and $ A\rvert_{y=0}$ as auxiliary fields inside $F$ and $A$ that are reconstructed by solving \eqref{parentEOM}. Since the parent
Segal system gauges the higher-spin diffeomorphisms $\xi$ of the off-shell Segal system,
\eqref{parentEOM} encodes fully higher spin covariant derivatives of
the initial data, i.e., roughly speaking, it defines for us a higher spin covariant derivative $d+\tfrac1\hbar\,[A,\bullet]$. In order to construct an action,
we will also need an appropriate higher spin covariant version
of the measure $\sqrt{g}$, which is related to the quantum trace discussed below.

\subsection{Covariant action}
\label{sec:ourcoolAction}
What we are after is an invariant definition of Segal system,
which does not depend on the particular choice of coordinates
and/or predefined geometric structures such as an affine
connection determining the star-product. The first step is
to reformulate the system in the parent form~\eqref{parentEOM}-\eqref{covconstw}. In the second step,
we define an action in terms of the parent fields using
a version of the invariant trace proposed by Feigin, Felder 
and Shoikhet \cite[Sec. 4]{Feigin2005}. More specifically,
their construction allows one to define a trace over
the algebra of functions of a symplectic manifold endowed
with a star-product obtained using Fedosov quantization,
in an invariant way. In the case of the flat symplectic manifold
$\mathbb{R}^{2n}$, the trace reduces to the usual integral
of the phase space, used in Segal's formulation of CHS gravity. 

The core of the construction is the Hochschild $2n$-cocycle
with values in $\cA_{2n}^*$,
\begin{equation}
    \Phi: \underbrace{\cA_{2n} \tensor \ldots \tensor \cA_{2n}}_{2n+1\,\text{times}}
    \longrightarrow \fC
\end{equation}
of the Weyl algebra $\cA_{2n}$ generated by formal power series 
in $y^a$ and polynomials in $p_a$,
see Appendix~\bref{app:modification_FFS} for further details.
Using this cocycle, one can build a reduced polylinear map
\begin{equation}
    \mu:\cA_{2n}^{\tensor (n+1)} \longrightarrow \fC[p]\,,
\end{equation}
defined as\footnote{Another option is to plug in $dp_ay^a$ in the last slots and get a cocycle with values in top-forms of $p$-fiber which is a natural integration object over fiber (see also Appendix \bref{app:Fedosov}).}
\begin{equation}
    \mu(a_{0}|a_1,\ldots,a_n) = [\Phi](T_{p'}  a_{0};T_{p'} a_1,\ldots,
    T_{p'} a_n,y^{b_1},\ldots,y^{b_n})\epsilon_{b_1\ldots  b_n}|_{p'_a=p_a}\,,
\end{equation}
where $T_{p'}a(y,p)=a(y,p+p')$ and $[\Phi]$ denotes the antisymmetrization
of $\Phi$ in its $2n$ arguments, i.e. $[\Phi]$ is the associated
Chevalley--Eilenberg $2n$-cocycle with values in $\cA_{2n}^*$.

Let us list here the algebraic properties of the map $\mu$:
\begin{enumerate}[label=$(\roman*)$]
\item Total antisymmetry in its $n$ last arguments,
    \begin{equation}        \mu(a_0|a_{\sigma_1},\dots,a_{\sigma_n})
        = (-1)^{|\sigma|}\,\mu(a_0|a_1,\dots,a_n)\,,
    \end{equation}
    for any elements $a_0,a_1,\dots,a_n \in \cA_{2n}$ 
    of the Weyl algebra and any permutation $\sigma\in\cS_n$;
\item\label{item:normalization} The normalization condition,
    \begin{equation}\label{eq:normalization}
        \mu(f;p_{a_1}\,\dots,p_{a_n})
        = \tfrac1{n!}\,\epsilon_{a_1 \ldots a_n} f\rvert_{y=0}\,,
    \end{equation}
    for any $f \in \cA_{2n}$;
    \item\label{item:cocycle} The `cocycle condition', modulo
    total derivative in $p$-space,
    \begin{eqnarray}
        && \sum_{i=0}^n\,(-1)^i\,
        \mu([a_{-1},a_i]_\ast;a_0,\dots,\hat a_i,\dots,a_n) \\
        && \ + \sum_{0 \leq i<j \leq n}
        (-1)^{i+j}\,\mu(a_{-1};[a_i,a_j]_\ast,
        a_0,\dots,\hat a_i,\dots,\hat a_j,\dots,a_n)
        = \tfrac{\partial}{\partial p_a}
        \varphi_a(a_{-1};a_0,\dots,a_n)\,,
        \nonumber
    \end{eqnarray}
    for some $\varphi_a(a_{-1}|a_0,\dots,a_n) \in \C[p]$
    (see Appendix \bref{app:modification_FFS} for more details);
\item\label{item:sp2n} The $\mathfrak{sp}(2n,\R)$-invariance,
    \begin{equation}
        \mu(a;-,\dots,-) = 0 = \mu(-;-,\dots,a,\dots,-)\,,
    \end{equation}
    for any element
    $a \in \mathfrak{sp}(2n,\R) \subset \cA_{2n}$.\footnote{Recall
    that the Lie algebra $\mathfrak{sp}(2n,\R)$ is embedded in the Weyl
    algebra $\cA_{2n}$ as the subspace of quadratic elements.}
\end{enumerate}

The above structure allows us to define an action principle
as follows. Let $l_*(F)$ be a star-product function (see Appendix
\bref{app:starHeaviside} for more details), and consider
the following functional
\begin{equation}\label{eq:cov-action}
    S[A,F] = \int_\manX\int_{p-\text{fiber}}
    \mu\big(l_*(F);A, \dots, A\big)\,,
\end{equation}
where the fields $A$ and $F$ are subject to the off-shell
constraints \eqref{parentEOM}. This action is invariant
under the gauge transformations generated by $\xi$,
up to boundary terms: indeed, upon using the property
\ref{item:cocycle} of the map $\mu$, together with
the flatness and covariant constancy of $A$ and $F$
respectively, one can show that (see corollary \ref{cor:mu} in Appendix
\bref{app:modification_FFS})
\begin{equation}
    \delta_\xi\mu\big(l_\ast(F);A,\dots,A\big) \propto d(\dots)
    + \tfrac{\partial}{\partial p_a} (\dots)_a\,,
\end{equation}
i.e. the integrand of \eqref{eq:cov-action} is gauge
invariant up to a total derivative. Note that the particular
form of $l_\ast$ does not matter for this property,
what is important is that $l_\ast(F)$ is covariantly constant
with respect to the connection $A$, which is the case
since we assume $F$ to be covariantly constant.

The choice of an appropriate star-function $l_\ast$ is however
crucial to ensure the invariance of the above action under
higher spin Weyl transformations, i.e. gauge transformations
generated by $w$. To see that, let us first point out that
the action \eqref{eq:cov-action} can be interpreted as
a trace. Indeed, interpreting $A$ as a Fedosov connection the space $\cS(\manX)$ of functions on $T^*\manX$  can be endowed
with a star-product, via
\begin{equation}
    f \star g := (F \ast G)\rvert_{y=0}\,,
\end{equation}
where $F=F(f)$ and $G=G(g)$ are the unique covariantly constant
sections such that $F\rvert_{y=0}=f$ and $G\rvert_{y=0}=g$
(see Appendix \bref{app:proof}).
Then for any $f \in \cS(\manX)$ of compact support, the operation
\begin{equation}\label{muformula}
    \tr_A(f) := \int_\manX\int_{p-\text{fiber}}
    \mu(F;A,\dots,A)\,,
\end{equation}
defines a trace, in the sense that it verifies
\begin{equation}
    \tr_A(f \star g) = \tr_A(g \star f)\,,
\end{equation}
for any other $g \in \cS(\manX)$. This cyclicity property
also follows directly from the flatness of $A$, the covariant
constancy of $F$ and $G$, and the cocycle condition
obeyed by $\mu$, upon discarding total derivative terms.
The action \eqref{eq:cov-action} can then be re-written as
\begin{equation}
    S[A,F] = \tr_A\big(l_\star(f)\big)\,,
\end{equation}
with $f=F\rvert_{y=0}$, and its variation under higher-spin Weyl 
transformations reads
\begin{eqnarray}
    \delta_w S = 2\,\tr_A\big(l'_\star(f) \star f \star w_0\big)
    = \int_\manX\int_{p-\text{fiber}}
    2\,\mu\big(l'_\ast(F) \ast F \ast w; A, \dots, A\big)\,,
\end{eqnarray}
where $w_0 = w\rvert_{y=0}$. As in the Segal case,
one can see that the choice $l_\ast(F) = \Theta_\ast(F)$ --- the Heaviside
function, guarantees the invariance of the action under
HS Weyl transformations,
since $l'_\ast(x) \ast x = \delta_\ast(x) \ast x = 0$
(at least formally, see Appendix \bref{app:starHeaviside}
for more details). On top of that, this choice also
implies that our action reduces to Segal's around flat space.
Indeed, for $\manX=\R^n$, the choice $A^{(0)}=dx^\mu \delta_\mu^a\,p_a$ can be
made globally, so that the action reduces to
\begin{equation}
    S[f]={S[A=A^{(0)},F=F(f)]}
     = \int_{\R^{2n}} \Theta_\star(f)\,,
\end{equation}
upon using the normalisation property \ref{item:normalization}
of $\mu$, and the fact that the star-product $\star$
simply becomes the Moyal--Weyl star-product in $x$ and $p$
for this particular choice of connection $A=A^{(0)}$.

The above analysis implies
that the system is gauge invariant under the transformations
generated by the gauge parameters $\xi$ and $w$. Let us dwell on
the interpretation of the system: action~\eqref{eq:cov-action} 
is understood as a functional defined on the space of solutions
of the off-shell system \eqref{parentEOM}. As we are going
to see in the next subsection, using the gauge freedom
one can set $A$ to be a fixed connection while solutions
for $F$ are 1:1 with the unconstrained configurations
$f(x,p)=F|_{y=0}$. Consequently, the functional
\eqref{eq:cov-action} gives a globally well-defined action
on the configuration space of unconstrained $f(x,p)$.

\section{Gauge conditions, field redefinitions, and background fields}
\label{sec:gauges}
The action \eqref{eq:cov-action} supplemented with the off-shell constraints  \eqref{parentEOM} and gauge transformations~\eqref{gs} and \eqref{covconstw}
of the parent Segal system defines the action of CHS gravity in a covariant and
coordinate-independent way. However, in this formulation the system involves an overcomplete set of fields, which effectively reduces to the minimal one only upon taking into account off-shell constraints and algebraic gauge transformations. Below we discuss this procedure in more details, and identify several useful gauge conditions.

\subsection{Segal gauge}
Our first task is to demonstrate that locally any solution
to the zero-curvature equation \eqref{parentEOM} is equivalent to 
one where $A=A^{(0)}$ with some fixed $A^{(0)}$. Recall that
we consider connections $A$ whose piece linear in $p$
is invertible, i.e. $e_\mu^a$ entering $dx^\mu e_\mu^a(x) p_a$
is invertible. In other words, we want to prove that locally all flat
connection on the Weyl bundle with an invertible $e^a_\mu$ belong to the same gauge orbit. The parent version
of HS diffeomorphisms act on $A$ and $F$ as
\begin{equation}
    \label{gs-finite}
    \begin{aligned}
        A^\prime & = e_\ast^{-\frac{\lambda}{\hbar}} \ast
        (\hbar d+A) \ast e_\ast^{\frac{\lambda}{\hbar}}\,,
        & \qquad 
        F^\prime & = e_*^{-\frac{\lambda}{\hbar}} \ast F \ast 
        e_\ast^{\frac{\lambda}{\hbar}} \,,\\
        \qquad
        \delta_\lambda A
        &=d\lambda+\ffrac{1}{\hbar}\qcommut{A}{\lambda}\,,
        & \qquad
        \delta_\lambda F
        & = \ffrac{1}{\hbar}\qcommut{F}{\lambda}\,, 
    \end{aligned}
\end{equation}
where in the second line we list the infinitesimal version. 
In the present local analysis, we take $A^{(0)}=dx^\mu\delta_\mu^a p_a$. It is convenient to denote space-time coordinates by $x^a$ so that $A^{(0)}=dx^a p_a$.

Using the above gauge transformations with $\lambda$
of the form $\lambda_a(x) y^a$, one can set to zero
the $y,p$-independent term in $A$.
Then $\lambda=\lambda^b_a y^a p_b$ allows us to set $e^a_b=\delta^a_b$.
Furthermore, the flatness condition for $A$ sets to zero
the antisymmetric part of $h_{ab}$ entering $A$ as
$dx^a h_{ab}y^b$ in $A$. The symmetric part is then set to zero
by gauge transformations with $\lambda$ of the form
$\lambda_{ab}y^a y^b$ so that one can assume that
$A=dx^a p_a+\text{terms of higher order in $y,p$}$.

Suppose that $A^\prime$ is another flat connection
with invertible $e^a_b$.  As above we can also assume that
it starts with $dx^ap_a$. Introduce the following degree:
\begin{equation}
\label{Fedosov-degree}
    \deg(y) = 1 = \deg(p)\,,
    \qquad 
    \deg(\hbar) = 2\,,
\end{equation}
which is precisely the degree used in Fedosov quantization,
and hence we will refer to it from now on as the Fedosov degree.
Expanding $A$ and $A^\prime$ according to this degree
as $A=A_{(1)}+A_{(2)}+\ldots$, and similarly for $A'$,
one has $A_{(1)}=A^\prime_{(1)}$. 

We then continue by induction. To this end, let us assume
that $A_{(l)}=A^\prime_{(l)}$ for all $l\leq k$.
The zero-curvature equations for $A$ and $A'$ imply
\begin{equation}
\delta (A^\prime_{(k+1)}-A_{(k+1)})=0\,, \qquad \delta\equiv-\frac{1}{\hbar}\qcommut{A_{(1)}}{\cdot}=dx^a\dl{y^a}\,,
\end{equation}
where following Fedosov we have introduced a nilpotent operator $\delta$. Because the cohomology of $\delta$ is trivial in nonvanishing form-degree, it follows $A_{(k+1)}^\prime-A_{(k+1)}=\delta \lambda_{(k+2)}$ for some $\lambda_{(k+2)}$. Applying gauge transformation~\eqref{gs-finite} with $\lambda=\lambda_{(k+2)}$ to $A^\prime$ one finds $A-A^\prime$ is of degree $k+2$ or higher. In particular, taking $A=dx^ap_a$, one finds that this gauge is locally reachable. 

Despite this gauge being reachable
only locally, it is very instructive. In particular,
it is clear that the covariant constancy condition 
$dF+\tfrac1\hbar\,\qcommut{A^{(0)}}{F}=0$
has a unique solution $F=f(x+y,p)$ satisfying
$F|_{y=0}=f$ for any unconstrained $f=f(x,p)$.

\subsection{Metric-like gauges}
Segal's gauge just discussed is the simplest example
of a gauge where all the independent fields are contained
in $F$, while $A$ is set to a background value. We refer
to such gauges as metric-like ones. 

A class of globally well-defined connections
on the Weyl bundle can be constructed starting with
a given torsion-free affine connection.
More specifically, let 
\begin{equation}
    \varpi = e^a\,\PPP_a + \Gamma^a{}_b\,\TTT^b{}_a\,,
    \qquad
    \PPP_a = p_a\,,
    \qquad 
    \TTT^a{}_b := y^a\,p_b\,,
\end{equation}
verify
\begin{equation}
    de^a + \Gamma^a{}_b\,e^b = 0\,,
    \end{equation}
so that its curvature is given by
\begin{equation}
    d\varpi+\tfrac1{2\hbar}\,[\varpi,\varpi]_\ast
    = R^a{}_b\,\TTT^b{}_a\,.
\end{equation}
It is a standard statement~\cite{Fedosov:1994zz} (see also \cite{Grigoriev:2006tt,Grigoriev:2016bzl} for precisely this setup and Appendix~\bref{app:proof} for a proof of a more general statement) that such a connection has a unique  completion $A$
such that $A_{(0)}=0, A_{(1)}+A_{(2)}=\varpi$ and $hA_{(l)}=0$ for $l>2$,  where we again use the decomposition in Fedosov degree~\eqref{Fedosov-degree}.
Here $h$ is a contracting homotopy for $\delta$, given by
\begin{equation}
\label{eq:contracting}
    h = \tfrac1N\,y^a\,e^\mu_a\,\imath_{\partial_\mu}\,,
\end{equation}
where $\imath_{\dd_\mu}$ denotes the interior product by
the vector field $\dd_\mu$, and $N$ is the operator counting
the sum of the form degree as well as the degree in $y$.
Note also that the completion is such that all $A_{(l)}$
are linear in $p$. The first few orders read as
\begin{equation}
    \begin{aligned}
        A & = e^a\,\PPP_a + \Gamma^a{}_b\,\TTT^b{}_a
        -e^a\,\Big(\tfrac13\,
        R_{ab}{}^d{}_c\,y^b y^c\,p_d
        +\tfrac1{12}\,\nabla_b R_{ac}{}^e{}_d\,y^b y^c y^d\,p_e \\
        & \hspace{120pt} + \big[\tfrac1{60}\,\nabla_b \nabla_c R_{ad}{}^f{}_e
        + \tfrac2{45}\,R_{ab}{}^g{}_c\,R_{de}{}^f{}_g\big]\,y^b y^c y^d y^e\,p_f + \dots\Big)\,,
    \end{aligned}
\end{equation}
where the $\dots$ denote corrections of higher order in $y$,
but do not contain any terms in $\hbar$
(this is due to the fact that $\varpi$ is linear in $p$,
see Appendix \bref{app:proof}). Similarly,
the first few orders of a covariantly constant section $F$
such that $F\rvert_{y=0}=f(x,p)$ are given by
\begin{equation}
    F = f + y^a\,\nabla_a f + \tfrac12\,y^a y^b\,
    (\nabla_a \nabla_b + \tfrac13\,R_{ab}{}^c{}_d\,
    p_c\,\tfrac{\partial}{\partial p_d})\,f + \dots\,,
\end{equation}
where the $\dots$ represent higher order corrections in both $y$
and $\hbar$.

It can be useful to take $\varpi$ to be a metric-compatible
connection. In this case, $A$ constructed above and $F$ determined by
$f=\tfrac12\,\eta^{ab}p_ap_b$, where $\eta$ is a Minkowski metric,
describe a gravitational background. In particular, given a frame
$e$, the flat connection $A$ is entirely determined by the metric
$g_{\mu\nu}=\eta_{ab} e^a_\mu e^b_\nu$.

The gauge just constructed can be considered as a covariantized
and globally well-defined version of the Segal gauge. A reason
to call it metric-like is that the independent fields encoded
in $f(x,p)$ are totally symmetric tensors (a covariantized version
of Fradkin--Tseytlin fields). This gauge is useful in the analysis
of the propagation of CHS fields on gravitational backgrounds, and it was already 
employed in this context in~\cite{Grigoriev:2016bzl}.

By employing a fixed metric-like gauge, the covariantized Segal action becomes a functional of the metric-like CHS fields encoded in $f(x,p)$. Because this action is gauge-invariant, the gauge-fixed actions corresponding to gauge-equivalent background connections $A$ and $A^\prime$ should be related by a field redefinition.

Note also that by a suitable gauge transformation one can set $e^a_\mu=\delta^a_\mu$, i.e. pick a coordinate local frame. In this case, the flat connection reads
\begin{equation}
    A = dx^\mu\,(p_\mu + \Gamma_{\mu\nu}^\lambda\,y^\nu\,p_\lambda
    -\tfrac13\,R_{\mu\nu}{}^\lambda{}_\sigma\,
    y^\nu y^\sigma\,p_\lambda + \dots)\,.
\end{equation}

Let us also mention that, due to the $\mathfrak{sp}(2n,\R)$-invariance
of cocycle $\mu$, \eqref{muformula}, the connection
$\Gamma$ never appears alone in the expression of the action.
Indeed, it is the component of $A$ along a quadratic element
of the Weyl algebra, and hence belongs to its $\mathfrak{sp}(2n,\R)$
subalgebra. In other words, $\Gamma$ will appear in the final
action only through the covariant derivative $\nabla$,
or its curvature $R$.

The above construction of a flat connection starting from
a curved one is an instance of the slightly more general mechanism
of connection flattening in the Weyl algebra, which is
encompassed in the following proposition.
\begin{proposition}\label{prop:flattening}
    Let $\cD$ be a connection on the Weyl bundle $W(\manX)$,
    whose connection $1$-form is $\cA_{2n}$-valued and
    acts in the adjoint. Let in addition the Fedosov
    degree $1$ piece of $\cD$ be an invertible vielbein.
    Then there exists a $1$-form
    $w\in \Omega^1(\manX)\tensor \Gamma(\hat W(\manX))$ 
    such that $\cD + \tfrac1\hbar\,[w,-]_\ast$
    is a flat connection. Moreover,
    any $f \in \Gamma\big(S(T\manX)\big)$ has a unique
    completion to a section $F \in \Gamma(W(\manX))$
    such that $\cD F + \tfrac1\hbar\,[w,F]_\ast=0$
    and $F|_{y=0}=f$. We refer to $F$ as to a covariantly-constant
    lift of $f$.
\end{proposition}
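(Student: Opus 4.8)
The strategy is Fedosov's standard recursive construction, adapted to the cotangent-bundle setting and carried out with respect to the Fedosov degree \eqref{Fedosov-degree}. First I would normalize $\cD$: using gauge transformations generated by parameters $\lambda$ of Fedosov degree $\leq 2$ (exactly as in the Segal-gauge argument of Section \bref{sec:gauges}), one removes the $y,p$-independent part of the connection $1$-form, brings the vielbein to $\delta^a_\mu$ (or at least to a fixed invertible reference), and removes the degree-$2$ piece that is not forced by flatness, so that after these adjustments $\cD$ has Fedosov-degree-$1$ part equal to the model operator $\delta = dx^a\,\partial/\partial y^a$, whose cohomology is trivial in positive form degree with explicit contracting homotopy $h$ as in \eqref{eq:contracting}. (Strictly one should keep the affine-connection piece $\Gamma$ as part of $\cD$ and only ask that $w$ supply the higher corrections; the argument is the same, with $\delta$ replaced by $\delta$ plus the degree-$2$ term and $h$ its homotopy.)

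Next I would solve for $w$ order by order in Fedosov degree. Writing $\cR$ for the curvature of $\cD + \tfrac1\hbar[w,-]_\ast$, the flatness equation has the schematic form $\delta w = \text{(curvature of }\cD) + \tfrac1\hbar[\,\cdot\,,w]_\ast + \tfrac1{2\hbar}[w,w]_\ast$, where at each degree the right-hand side involves only the already-determined lower-degree components of $w$. Applying $h$ and using $h\delta + \delta h = \mathrm{id}$ on the relevant subspace gives the unique recursion $w_{(k+1)} = h\big(\text{lower-order terms}\big)$, with the normalization $h w = 0$ fixing the $\delta$-closed ambiguity. One then checks inductively, exactly as in Fedosov's theorem, that the so-defined $w$ actually yields vanishing curvature: the Bianchi identity $\nabla \cR = 0$ together with $h\cR = 0$ and the triviality of $\delta$-cohomology forces $\cR = 0$. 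The covariantly-constant lift is handled by the same mechanism: the equation $\cD F + \tfrac1\hbar[w,F]_\ast = 0$ reads $\delta F = (\text{terms of strictly lower }y\text{-degree in }F)$, so $F = f + h(\cdots)$ determines all components recursively from $f = F|_{y=0}$, and flatness of the connection guarantees consistency of the recursion.

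The main obstacle — or rather the only point requiring genuine care — is the convergence/well-definedness of these recursions in this particular setup, where $F$ is a \emph{polynomial} in $p$ but a formal \emph{power series} in $y$ (and we work over $\R[[\hbar]]$). I would check that the homotopy $h$ and the star-product bracket do not spoil polynomiality in $p$ at each step: since $\delta$ and $h$ act only in the $y$ (and form) variables, and the background pieces one introduces ($\Gamma$, curvature terms) are at most linear in $p$, the degree in $p$ is under control and each Fedosov degree receives only finitely many contributions, so the recursions terminate degree-by-degree and assemble into a well-defined power series in $y$. The inductive verification that the resulting connection is genuinely flat (not merely that $h\cR=0$) is the technical heart, and I would either reproduce Fedosov's argument verbatim or, cleaner, cite the general statement proved in Appendix \bref{app:proof} and simply note that the hypothesis "Fedosov-degree-$1$ piece is an invertible vielbein" is exactly what is needed to run it.
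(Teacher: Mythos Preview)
Your approach is essentially the paper's: Fedosov's recursion in the degree~\eqref{Fedosov-degree}, using the contracting homotopy $h$ for the degree-$1$ differential and the Bianchi/Jacobi identities to verify $\delta$-closedness of the right-hand side at each step. The one point where you diverge is the preliminary ``normalization'' of $\cD$ by gauge transformations (removing the degree-$0$ piece, bringing the vielbein to $\delta^a_\mu$, etc.). This is unnecessary and, taken literally, problematic: setting $e^a_\mu=\delta^a_\mu$ is only possible locally, whereas the proposition is global on $\manX$. The paper's proof avoids this entirely by working with the given vielbein and defining $\delta=e^a\,\partial/\partial y^a$ and $h=\tfrac1N\,y^a e^\mu_a\,\imath_{\partial_\mu}$ directly --- the invertibility hypothesis is precisely what makes $h$ a contracting homotopy. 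The degree-$0$ piece is handled not by gauging it away but by absorbing its differential into a perturbed $\delta'$ whose cohomology is again trivial in positive form degree. With that correction your outline matches the paper's proof; the polynomiality-in-$p$ concern is automatically handled because the Fedosov degree bounds the $p$-degree at each step.
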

Say that the connection on the Weyl bundle is (locally)
given by $\cD=d+\tfrac1\hbar\,[\varpi,-]_\ast$,
then the flat connection is given by $d+\tfrac1\hbar\,[A,-]_\ast$
with $A=\varpi+w$. For instance, in the previous example
we considered an affine connection encoded by
the $\varpi=e^a\,\PPP_a+\Gamma^a{}_b\,\TTT^b{}_a$ valued
in $\mathfrak{igl}(n,\R)$, but in principle, one could
consider more general, nonlinear connections as a starting
point. The proof of this statement is a minor modification
of the proof that any symplectic connection can be lifted
to a flat connection on the Weyl bundle of a symplectic
manifold \cite[Th. 3.2]{Fedosov:1994zz} (see also
\cite[Th. 2]{Dolgushev:2003fg}) and is therefore relegated
to Appendix \bref{app:proof}.

\subsection{Frame-like gauges}
\label{sec:frame-like-gauge}
In some sense, opposite to metric-like gauges are frame-like
ones. In these gauges, all the dynamical fields are contained
in the $A$-field and are interpreted as components
of a connection $1$-form. 

To see this, let us again use the Fedosov degree~\eqref{Fedosov-degree}
and start with a generic solution $(A,F)$ of the parent system
(recall that $e^a_\mu(x)$ in the term $dx^\mu e^a_\mu(x) p_a$
in $A$ is assumed invertible). In addition, we now also assume 
$g^{ab}(x)$ in the term $\tfrac12\,g^{ab}p_ap_b$ in $F$
to be invertible and to have Lorentzian signature.
Performing a finite gauge  transformation~\eqref{gs-finite}
with $\lambda$ of the form $h_a(x)y^a$, one can achieve
$F_{(1)}=f_a(x)y^a$, i.e. remove the term linear
in $p_a$.\footnote{Strictly speaking, this is true
under the assumption that all $s\neq 2$ fields are small
compared to $s=2$ field. Otherwise, the respective series may 
diverge. }
In so doing, the transformed $A$ gets in addition the nonvanishing $A_{(0)}=A_{(0)}(x)$ contribution. In the next step, we perform a gauge transformation
with $\lambda$ of the form $\lambda^a_b y^b p_a$
in order to set $F_{(2)}$ to be $\tfrac12\,\eta^{ab}p_a p_b$,
where $\eta^{ab}$ is the inverse of the standard Minkowski 
metric.

We then proceed by induction. Suppose that by further gauge
transformations we succeeded to set the $p$-dependent parts
of $F_{(l)}|_{y=0}$ to zero  for all $l \leq k$ save for
$l=2$. The $p$-dependent part of $F_{(k+1)}|_{y=0}$
can be then gauged away by parameters of the form 
$\sum_m\hbar^{m}y^a\lambda_a^{b_1 \ldots b_{k-2m}}p_{b_1}\ldots p_{b_{k-2m}}$
(i.e. of degree $k+1$ and linear in $y$).
By degree reasoning, such gauge transformations cannot affect
$F_{(l)}$ with $l\leq k+1$. At the same time, we can eliminate
the $p$-independent term in $F_{(k+1)}|_{y=0}$ by the leading
(in $\hbar$) term  of the gauge transformation. However,
in so doing one can get nonvanishing contributions
in $F_{(k+1)}$, which are proportional to $\hbar$.
The procedure can then be iterated,
giving a $p$-independent $F_{(k+1)}$. 

The induction then implies that the gauge where 
$F|_{y=0}=D(x)+\tfrac12\,\eta^{ab}p_ap_b$ is reachable.
In other words, the configuration of the parent system
is entirely determined by the configuration of the connection
$A$ (although, strictly speaking $D$ is not captured by $A$).
Indeed, Proposition~\bref{prop:flattening} implies that 
$D(x)+\tfrac12\,\eta^{ab}p_ap_b$ admits a unique covariantly 
constant lift $F$ satisfying $F|_{y=0}=D(x)+\tfrac12\,\eta^{ab}p_ap_b$.

A natural question is then which components of $A$
can be taken as independent fields.  It turns out that
the minimal set is given by the HS frame field encoded
in 
\begin{equation}
    E(p) = dx^\mu E_\mu(x,p)
    = dx^\mu\,(a_\mu(x)+ e_\mu^a p_a+\ldots)\,.
\end{equation}
For the covariance we also take a fixed torsion-free
Lorentz connection $\Gamma=dx^\mu\omega_\mu{}^a{}_b\,y^b p_a$.
Starting with
\begin{equation}
    \varpi = dx^\mu E_\mu(x,p) - dx^\mu (\partial_\mu a_\nu)
    e^\nu_ay^a+\Gamma\,,
    \qquad
    e^a_\mu e^\mu_b=\delta^a_b\,,
\end{equation}
a minor modification of the proof of the Proposition~\ref{prop:flattening}
allows one to construct a flat connection $A$ satisfying 
$A|_{y=0}=E(p)$ (and an extra condition,
see Appendix~\bref{app:proof}). An alternative proof, is given in Appendix~\bref{app:proof-HSframe}.

We have just seen that the HS frame field (whose configurations
are 1:1 with metric-like conformal higher spin fields if one takes the totally 
symmetric components of the HS frame) serve as the initial data
for the $A$-field. In particular, the covariantized 
action~\eqref{eq:cov-action} can be written in this gauge
as a functional of the HS frame field $E(p)$. We conclude
that this approach also generate a frame-like description
of the Segal action. 

It is worth mentioning that the parent formulation
was initially developed to explicitly relate metric-like
and frame-like formulations. In particular, the Lagrangian
version of the parent formalism~\cite{Grigoriev:2010ic,Grigoriev:2012xg}
allows one to systematically derive a frame-like formulation
starting from the metric-like one, so that it is not surprising
that these formulations are reproduced through different gauges
of the parent system. 

To illustrate the relation between the frame-like and
the metric-like gauges, let us fixe a gravitational background 
described by the frame $e^a$ and the Lorentz 
connection $\omega^{ab}$, so that the metric is 
$g_{\mu\nu}=e^a_\mu e^b_\nu \eta_{ab}$, and consider
a linearized spin-$s$ field $\Phi^{a(s)}(x)$
on this background described in the metric-like gauge.
We restrict ourselves to a single CHS field in $F$
since the argument is about free fields for simplicity.
More precisely, the configuration for $A$ and $F$ reads as 
\begin{align}
    A & = e^a\,\PPP_a + \tfrac12\,\omega^{ab}\,\LLL_{ab}
    + \dots\,,
    & F & = \tfrac12\,p^2+ \dots + \Phi^{a(s)}\, p_a \dots p_a
    + \dots\,
\end{align}
where $\ldots$ denotes the terms that complete
the initial data of $A$ and $F$ to a solution of the parent
system \eqref{parentEOM}. Now, if we perform a gauge 
transformation with 
\begin{align}
    \xi & = \tfrac12\,\Phi^{a(s)}\, y_a p_a \dots p_a
    + \dots\,,
\end{align}
we get as a result
\begin{align}
    A & = e^a\,\PPP_a + \tfrac12\,\omega^{ab}\,\LLL_{ab}
    - \tfrac12\,e_a\,\Phi^{a b(s-1)}\, p_b \dots p_b
    + \tfrac12\,\nabla\Phi^{a(s)}\,y_a p_a \dots p_a + \dots\,,
\end{align}
and
\begin{equation}
    F = \tfrac12\,p^2 + \dots\,,
\end{equation}
i.e. we have moved the linearized spin-$s$ field from $F$
to $A$. In the last two components of $A$, we see the CHS
vielbein $e^{a(s-1)}=e_m\Phi^{m a(s-1)}$ in a particular
gauge followed by its first auxiliary field that is expressed
in terms of the first derivative of $\Phi^{a(s)}$.   

Let us finally mention that in showing the existence
of the metric-like and the frame-like gauges, we only made use 
of HS diffeomorphisms. It follows that the analogous gauges 
are reachable in the version of the parent 
system~\cite{Vasiliev:2005zu,Grigoriev:2006tt,Grigoriev:2012xg}
with HS Weyl transformations dropped, which describes 
nonlinear gauge transformations of the off-shell
massless HS fields with the trace constraint relaxed. Let us note that massless higher spin fields within a similar framework were discussed in \cite{Ponomarev:2013mqa}.

\subsection{Conformal geometry gauge}
We now get back to metric-like gauges and demonstrate that
with a suitable choice of $A$ one can make the underlying 
conformal geometry manifest.

The idea of this approach is to observe that the conformal
algebra $\mathfrak{so}(n,2)$ can be identified as a Lie subalgebra
of $\cA_{2n}$, and more specifically of its subalgebra of elements 
linear in $p$. In the standard basis, the commutation relations
read as
\begin{subequations}
    \begin{align}
        [\DDD,\PPP^a] & = +\PPP^a\,,
        & [\LLL^{ab},\PPP^c]
        & = \PPP^a\eta^{bc} - \PPP^b\eta^{ac}\,,\\
        [\DDD,\KKK^a] & = -\KKK^a\,, 
        & [\LLL^{ab},\KKK^c] & 
        = \KKK^a\eta^{bc} - \KKK^b\eta^{ac}\,,\\
        [\KKK^a,\PPP^b] & = \eta^{ab}\DDD - \LLL^{ab}\,,
        & [\LLL^{ab},\LLL^{cd}] & = \LLL^{ad}\eta^{bc}
        + \text{three more}\,.
    \end{align}
\end{subequations}
with $\PPP_a$, $\LLL_{ab}$, $\DDD$ and $\KKK_a$ the generators
of translations, the Lorentz transformations, dilation and
special conformal transformations respectively.
They can be represented in $\cA_{2n}$ as 
\begin{equation}\label{conformalGens}
    \PPP_a = p_a\,,
    \qquad 
    \LLL_{ab} = 2\,p_{[a}\,y_{b]}\,,
    \qquad
    \DDD = y^a\,p_a + \Delta\,,
    \qquad
    \KKK_a = y_a\,(y \cdot p + \Delta) - \tfrac12\,y^2\,p_a\,,
\end{equation}
where $\Delta \in \R$ is any real number for the moment.

A generic connection $\varpi$ valued in the conformal algebra,
\begin{equation}
    \varpi = e^a\,\PPP_a + \tfrac12\,\omega^{ab}\,\LLL_{ab}
    + b\,\DDD + f^a\,\KKK_a\,,
\end{equation}
can be put in a simpler
form by gauge fixing its component $b$ to zero, imposing
that it is torsionless (so that the spin-connection is
expressed in terms of the vielbein) and taking $f^a$
to be the Schouten $1$-form\footnote{This last condition
follows from imposing that the curvature of $\varpi$
along the Lorentz generators $F[\varpi]^{ab}$ is traceless
in the sense that $F[\varpi]_{\mu\nu}^{ab}\,e^\nu_b=0$.
This connection is called the normal Cartan connection
\cite[Def. 1.6.7 or 3.1.12]{Cap2009}, see also
\cite[Sec. 2]{Joung:2021bhf}
and \cite[Sec. 3.2]{Dneprov:2022jyn} for more details.}
$P^a$, whose components read
\begin{equation}
    P_\mu{}^a := \tfrac1{n-2}\,\big(R_\mu{}^a
    - \tfrac1{2(n-1)}\,e_\mu^a\,R\big)\,,
\end{equation}
where $R_\mu{}^a = R_{\mu\nu}^{ab}\,e^\nu_b$
and $R=R_\mu^a\,e^\mu_a$, with $R_{\mu\nu}^{ab}$ the Riemann
curvature of $\omega$. The curvature of this gauge fixed version
of the conformal connection,
\begin{equation}
    \varpi = e^a\,\PPP_a + \tfrac12\,\omega^{ab}\,\LLL_{ab}
    + P^a\,\KKK_a\,,
\end{equation}
takes the simple form
\begin{equation}
    d\varpi + \tfrac1{2\hbar}\,[\varpi,\varpi]_\ast
    = \tfrac12\,C^{ab}\,\LLL_{ab} + (\nabla P^a)\,\KKK_a\,,
\end{equation}
where $C^{ab} := R^{ab} - 2\,e^{[a}P^{b]}$ is the Weyl $2$-form.

We can now apply the flattening procedure of Proposition~\bref{prop:flattening} to construct
a flat connection $A$ starting from the previously
described conformal connection $\varpi$. The first few orders
of $A$ are given by
\begin{equation}
    \begin{aligned}
        A & = e^a\,\PPP_a + \tfrac12\,\omega^{ab}\,\LLL_{ab}
        + P^a\,\KKK_a + \tfrac13\,e^a\,C_{abc}{}^d\,y^b y^c\,p_d\\
        & \hspace{100pt}
        + e^a\,\big(\tfrac12\,\nabla_{[b} S_{a]|cd}{}^e
        + \tfrac1{12}\,\nabla_b C_{acd}{}^e\big)\,y^b y^c y^d\,p_e
        + \dots\,,
    \end{aligned}
\end{equation}
where we introduced the tensor
\begin{equation}
    S_{a|bc}{}^d := P_{ae}\,(\delta^e_{(b}\delta^d_{c)}
    -\tfrac12\,\eta^{ed}\eta_{bc})\,,
\end{equation}
and by construction, the higher orders terms in $A$
will be contraction of the covariant derivative of the Weyl
tensor $C$ and the Schouten tensor $P$.
Having determined $A$, we can now turn our attention to
the $0$-form $F$, which can be constructed as a covariantly
constant lift of an unconstrained $f(x,p)$. The first few orders
of $F$ are given by
\begin{equation}
    F = f + y^a\,\nabla_a f + \tfrac12\,y^a y^b\,
    (\nabla_a \nabla_b + [\tfrac13\,C_{adb}{}^c + 2\,S_{a|bd}{}^c]\,
    p_c\,\tfrac{\partial}{\partial p_d})\,f + \dots\,,
\end{equation}
where the dots indicate corrections of higher order in $y$
and $\hbar$.\footnote{Note that $\hbar$ corrections will appear
in $F$ only if $f$ contains terms which are at least quadratic
in $p$.}

Now let us focus on the spin-$2$ case. That is,
we consider that the $0$-form $F$ is simply
the completion of $\tfrac12\,p^2=\tfrac12\,\eta^{ab}\,p_a p_b$,
\begin{equation}
    F = \tfrac12\,p^2 + y^a y^b\,(\tfrac16\,C_a{}^c{}_b{}^d
    + S_{a|b}{}^{cd})\,p_c p_d + \dots\,,
\end{equation}
into a covariantly constant section. With $A$ being
the completion of the normal Cartan connection into
a flat connection of the Weyl bundle, this gauge
is a frame-like one. The gauge transformations generated by
a parameter $w$ which is the lift of the Weyl parameter $\sigma$,
\begin{equation}
    w = \sigma + y^a\,\nabla_a \sigma
    + \tfrac12\,y^a y^b\,\nabla_a \nabla_b \sigma
    + \tfrac16\,y^a y^b y^c\,\big(\nabla_a \nabla_b \nabla_c
    + 2\,S_{a|bc}{}^d\,\nabla_d\big)\sigma + \dots\,,
\end{equation}
reads
\begin{equation}
    \delta_w F = p^2\,\sigma + p^2\,y^a\,\nabla_a\sigma
    + \cO(y^2)\,,
\end{equation}
thereby signaling $F$ does encode a conformal spin-$2$ 
field. Since the connection $A$ also contains a conformal 
connection, and therefore also describes a conformal spin-$2$ 
field, let us check that their gauge transformations
are compatible. By this, we mean to check whether
it is possible to find a gauge parameter $\xi$ such that
the frame-like gauge is preserved, i.e.
\begin{equation}
    \delta_{\xi,w} F = \cO(y^2)\,,
\end{equation}  
so that the connection $A$ is transformed, and only
the completion of $\tfrac12p^2$ in $F$, i.e. terms
of order $2$ or higher in $y$ and $\hbar$, are affected.
Inspecting the above equation at the first few orders,
one finds that $\xi$ should take the form
\begin{equation}
    \xi = \sigma\,\DDD + \partial_a\sigma\,\KKK^a\,,
\end{equation}
and hence it implements the usual gauge transformation
of the normal conformal connection.\footnote{The fact
that the coefficient of $\KKK_a$ in the gauge parameter
is the derivative of the Weyl parameter $\sigma$ is a consequence
of the fact that we have gauge fixed $b$ (the gauge field
associated with dilation) to zero earlier.}

\subsection{Higher spin gauge}
It is sometimes convenient to rearrange a theory in terms
of the symmetries of one of its (maximally symmetric) backgrounds.
In the CHS gravity case, this corresponds to
$F^{(0)}=\tfrac12\,p^2 \equiv \tfrac12\,\eta^{ab}\,p_a p_b$.
As it was already discussed in Section \bref{sec:Segal},
the global symmetry algebra of this background is exactly
the higher spin algebra $\hs(\square \phi)$ of higher symmetries
of Laplacian \cite{Eastwood:2002su,Segal:2002gd}
(more or less by definition). The latter also fixes the conformal
weight $\Delta$ in \eqref{conformalGens} accordingly. 

It turns out that one can reconstruct field configurations
in the frame-like gauge of Section~\bref{sec:frame-like-gauge}
in terms of a connection of the HS algebra. To this end, let us fix
an embedding of HS algebra as a subspace in $\cA_{2n}$,
together with a projection to this subspace 
and take as $\varpi$ a connection with values in the subspace.
It can be then lifted to a flat connection $A$ by applying 
Proposition~\ref{prop:flattening}. What is important is
that independent fields sit in the HS frame part (which is 
the $y$-independent part of $\varpi$) but the completion
does not affect this part and hence gives a particular lift
of the HS frame to a flat $A$. The $F$ field is then reconstructed
by a covariantly constant lift of $\tfrac12\,p^2+D(x)$.

In this way we conclude that we succeeded to parameterize solutions to the parent system in terms of a connection of HS algebra. Of course this parametrization is much more redundant than the one in terms of HS frame. Moreover, it is also not clear how to explicitly identify the HS algebra as a gauge algebra in this setup. It would also be very interesting to come up with an appropriate
higher spin extension of the normal Cartan connection.

\subsection{Gauge symmetries vs. field redefinitions}
Given that the action of CHS gravity $S[A,F]$ depends on two fields that are subject to constraints \eqref{parentEOM}, it feels necessary to dwell on possible interpretations of such an action.  As we have already discussed, one can make use of a metric-like gauge where the action becomes a functional of the initial data $f(x,p)$ only. Of course, it still depends on a fixed connection $A$ but it is considered as a parameter, or better, a background field. Thanks to the gauge invariance of the covariant action, the change of $A$ leads to a field redefinition in terms of $f(x,p)$. This should be compared with the standard background field method (for gravity), see e.g. \cite{Fradkin:1985am}. Note that a version of the background field method in precisely this context was used in~\cite{Grigoriev:2016bzl}.

If, on the contrary, we employ a frame-like gauge, where the parent field configuration is determined solely by a HS frame $E(p)$, the action becomes a functional of $E$ which is unconstrained. Alternatively, one can go for  a more redundant description in which $A$ is parametrized by a connection of the HS algebra.

For applications it may be useful to distinguish between three different types of gauge
symmetries of the parent Segal system:
\begin{enumerate}[label=$(\arabic*)$]
\item Those generated by parameters $\xi$ that are {\it not}
covariantly constant ($d\xi+\tfrac1\hbar[A,\xi]_\ast\neq0$)
correspond to field redefinitions: they allow one to move components
of $A$ and $F$ into one another (as illustrated at the end of Section \bref{sec:frame-like-gauge}). In fact, one has to consider the quotient of all $\xi$ by the covariantly constant ones;
\item Those generated by {\it covariantly constant} parameters $\xi$
and $w$ are a covariant version of Segal's original gauge symmetries,
in the sense that they affect only the $0$-form $F$ via the commutator
and anti-commutator respectively;
\item Those generated by covariantly constant gauge parameters
$\xi$ and $w$ and that also preserve a given vacuum $F^{(0)}$
correspond to global symmetries and define a higher spin algebra, $\hs(F^{(0)})$.
\end{enumerate}

\section{Conclusions and Discussion}

We constructed a covariant action for the simplest class
of conformal higher spin gravities, which can be associated 
with the free scalar conformal matter, $\square \phi=0$.  
CHS gravity is the theory of the background conformal fields
that couple to `single-trace' operators, or to put it simply,
bilinear operators $J_s=\phi\pl\dots\pl\phi+\dots$,
most of which are conserved (higher spin) tensors. 

The constructions of CHS gravity proposed by Tseytlin and Segal
are closely related and prove the theories to be well-defined
(at least in terms of a perturbative expansion around flat space
background). It goes without saying that theories
of gravity should admit manifestly covariant, coordinate- and
background-independent formulations. 
Addressing this question is the goal of the present paper.
It is quite amusing that the action of CHS gravity requires
such advanced constructions from deformation quantization as
Shoikhet--Tsygan--Kontsevich formality that gives a proper measure
for the invariant trace on the algebra of quantum observables,
using the Feigin--Felder--Shoikhet cocycle. Somewhat related links
to the same formality have already been observed 
\cite{Sharapov:2017yde}, in particular, for Chiral higher spin
gravity \cite{Sharapov:2022phg, Sharapov:2022wpz, Sharapov:2022nps}.

The induced action for CHS gravity can be derived
from a simple particle model, as mentioned in Segal's paper
\cite{Segal:2002gd} (see also \cite{Segal:2001di}),
or discussed in more details in \cite{Bonezzi:2017mwr}.
As it turns out, the action \eqref{eq:cov-action}
can also be obtained from a particle model. Indeed,
the main ingredient used to construct this action,
namely the Feigin--Felder--Shoikhet cocycle, admits
a representation as a correlation function in a particular
one-dimensional sigma-model, often called topological
quantum mechanics \cite{Grady:2015ica, Li:2018rnc}.
More specifically, this model is the simplest example
of AKSZ type \cite{Alexandrov:1995kv}, namely,
the $1d$ AKSZ model whose target space is the BFV--BRST
extended phase space of a constrained Hamiltonian 
systems.\footnote{Such AKSZ-like models were introduced 
in~\cite{Grigoriev:1999qz} and shown to produce
the BV formulation for the respective extended Hamiltonian
action. If the Hamiltonian is non-trivial, such models
have a slightly more general structure but in the case
at hand the Hamiltonian is trivial and the model is of AKSZ type.}
The underlying BFV--BRST system is precisely the BFV--BRST
reformulation of the particle model~\cite{Segal:2002gd}
underlying CHS theory and reviewed in Appendix~\bref{app:BRST}.
The Fedosov extension of the particle model again leads to
an extended AKSZ sigma model which produces the invariant trace
(and hence the FFS cocycle itself) as a correlation 
function~\cite{Grady:2011jc,Grady:2015ica}. Note that
the Fedosov-like extension itself can be understood as a passage
to the extended BFV--BRST system~\cite{Grigoriev:2000rn}.
Moreover, it is precisely the BFV--BRST system underlying
the parent reformulation~\cite{Barnich:2004cr,Grigoriev:2006tt}
we employ in this work. 

Extensions and generalizations of the present work should exist along
several lines: (a) one can choose different vacua $F^{(0)}$ for $F$, which is
equivalent to having the same type of matter, e.g. scalar, but with different
conformally-invariant equations, e.g. $\square^k \phi=0$, $k>1$
for the scalar matter corresponds to $F^{(0)}=(p^2)^k$; (b) one can choose different types of matter,
e.g. fermion $\psi$, or, more generally, a mixed-symmetry (spin-)tensor field, see \cite{Grigoriev:2018wrx} for the discussion of the CHS gravity based on the free fermion (called Type-B) and \cite{Bekaert:2009fg,Beccaria:2015uta} for further extensions;
(c) supersymmetric extensions should also be possible and be based
on the Clifford--Weyl algebra, see the recent \cite{Kuzenko:2022hdv,Kuzenko:2022qeq} for $\mathcal{N}=1$.  

We expect that the approach of this paper provides an efficient way
to attack some of the problems of conformal higher spin fields
that have been around for a while: whether conformal gravity is a consistent
truncation of CHS gravity \cite{Nutma:2014pua,Grigoriev:2016bzl,Beccaria:2017nco}?; what are the gravitational backgrounds
that admit free conformal higher spin fields
\cite{Nutma:2014pua, Grigoriev:2016bzl, Beccaria:2017nco, Kuzenko:2019ill, Kuzenko:2019eni, Kuzenko:2020jie, Kuzenko:2022hdv}?; the structure
of (higher spin) Weyl anomaly and, hence, the problem of quantum consistency of CHS gravity. 

It is well-known that the deformation quantization of a symplectic manifold $M$ up to a natural equivalence is in one-to-one with characteristic classes $\Omega[\hbar]=\Omega_0+\hbar \Omega_1+...$, where $\Omega_i\in H^2(M,\mathbb{C})$ and $\Omega_0$ is the class of the symplectic form. In the case of a cotangent bundle $\Omega_0$ is trivial. There is a simple deformation of the off-shell parent system \eqref{parentEOM}-\eqref{covconstw}, which allows one to put conformal higher spin fields on an external electromagnetic background, e.g. Dirac string. One needs a nontrivial class $H^2(\manX,\mathbb{C})$ 
of the base manifold $\manX$ itself, which can be added to the r.h.s.
\begin{align}
    dA + \tfrac1{2\hbar}\,\qcommut{A}{A} & = \Omega[\hbar]\,.
\end{align}
This is one simple generalization of CHS gravity that is not accessible in a local chart, where one can always impose Darboux coordinates. The possibility to add de Rham cohomology classes as deformations to higher spin systems seems to be a quite generic feature \cite{Boulanger:2015kfa,Sharapov:2021drr}. 

A closely related idea is that a general solution
$f=c_1\Theta(x)+c_2$ to $f'(x)x=0$ contains the constant term. 
Though it does not make any contribution in Darboux coordinates,
for a general compact symplectic manifold the quantity $\tr_A(1)$
leads to a particular case of Fedosov/Nest-Tsygan index theorem 
\cite{Fedosov1995, Nest1995, Nest1996}.
However, the cotangent bundle is non-compact. It would be interesting
to see if some of the index theorems admit higher spin extensions.
For example, Euler characteristic is the second conformal invariant
in $4d$ and appears on equal footing with the Weyl gravity action
in the studies of conformal anomalies. It remains an open question
whether it admits a higher spin extension and what is the interpretation 
of the corresponding topological invariant.      

As it was already pointed out in \cite{Segal:2002gd} and explored
in \cite{Joung:2015eny, Beccaria:2016syk}, a conformal higher spin gravity
can be coupled to the matter it originated from (via the effective action
approach or via the Segal approach). In the Segal approach the corresponding
coupling is simply
\begin{align*}
    S[H,\phi] & = S_{CHS}[H] + \langle \phi| \hat H \phi \rangle\,.
\end{align*}
It would be interesting to find its covariant extension along the lines
of the present paper.

Note that any conformal higher spin gravity can be truncated to its low spin
(not higher than spin-two) subsector by setting all higher spin fields to zero.
Nevertheless, a higher spin extension nicely fits the deformation quantization
framework: it is natural to consider all differential operators, which is 
an associative algebra (Weyl algebra), rather than to restrict to vector fields (see also \cite{Bekaert:2008sa, Bekaert:2021sfc} wherein similar ideas are advocated). The low spin subsector of CHS gravity allows us to make a bridge to conformal (super-)gravities.

Concerning the overlap between CHS gravities and conformal (super)gravities, let us mention a tremendous work
done in \cite{Butter:2016mtk,Butter:2019edc} that culminated in the complete
action of the maximal $\mathcal{N}=4$ conformal gauged supergravity. Curiously,
this action contains an arbitrary function of scalars, see \cite{Tseytlin:2017qfd} for the recent discussion. It is not clear
how to generate this function via the effective action idea since
there does not seem to be possible to introduce this ambiguity
into $\mathcal{N}=4$ SYM coupled to background conformal supergravity
fields \cite{deRoo:1985np}. It would be very interesting to see if the approach
advocated in this paper, i.e. higher spin geometry as deformation quantization,
can explain this ambiguity and extend it to higher spins.

Another interesting closely related class of theories are self-dual
truncations of CHS gravity, which admit a natural twistor space formulation \cite{Hahnel:2016ihf,Adamo:2016ple}. These theories are specific to four dimensions.
Since these theories are much simpler than the full CHS gravity, it can be instructive to see how their spacetime actions, which, in principle, are derivable from twistor space, can be formulated within our approach. Similarly, one can try to construct a covariant action for Chiral higher spin gravity \cite{Metsaev:1991mt,Metsaev:1991nb,Ponomarev:2016lrm,Skvortsov:2018jea}, which at present is available either in the light-cone gauge or for certain subsectors \cite{Ponomarev:2017nrr,Krasnov:2021nsq} only. 

Another interesting issue is whether the covariant action (and hence the underlying FFS cocycle) can be systematically derived within a purely field-theoretical framework. Indeed, starting from the Segal action one should be able to reconstruct its parent reformulation, and hence reconstruct an appropriate version of FFS cocycle, using the approach of~\cite{Grigoriev:2010ic,Grigoriev:2012xg}. A slightly alternative field-theoretical interpretation of the covariant action and its underlying cocycle is in terms of a suitable BRST-invariant presymplectic structure (see~\cite{Alkalaev:2013hta,Grigoriev:2020xec,Sharapov:2021drr,Dneprov:2022jyn} for more details on the presymplectic BV-AKSZ approach). This would signal an intriguing relation between the geometry of local gauge theories and algebraic structures underlying the deformation quantization.

Let us also point out that the way CHS gravity emerges in the Segal construction
is somewhat similar to the IKKT model based on a higher spin algebra
studied e.g. in \cite{Sperling:2017dts,Tran:2021ukl,Steinacker:2022jjv}.
Both the Segal construction and the HS-IKKT model \cite{Sperling:2017dts,Fredenhagen:2021bnw,Steinacker:2022jjv} are examples
of non-commutative field theories. Lorentz invariance is violated in generic
non-commutative theories due to explicit dependence of Poisson structure
$\theta^{\mu\nu}(x)$ on $x$. In the Segal construction, it is the phase-space
that is quantized and there is no explicit violation of Lorentz symmetry
($\theta^{\mu\nu}(x)$ pairs up $x$-$p$ and vanishes for $x$-$x$). In addition,
the spacetime Lagrangian is obtained via tracing out or averaging over
the $p$-fiber, which does not violate Lorentz symmetry (in fact, the averaging
over $p$, as we showed, can be performed in a general covariant manner).
In the HS-IKKT model, the trick is in having a nontrivial fibration
over the spacetime that is again averaged over without having
to violate Lorentz symmetry.

\section*{Acknowledgement}
We are grateful to Xavier Bekaert, Andreas Cap, Euihun Joung, Alexey Sharapov and Arkady Tseytlin for the very useful discussions and comments. E.S. is also grateful to Franz Cicery and Bernard de Wit for the very useful comments. T.B. is also grateful to Kevin Morand for enlightening discussions. This research was partially completed at the workshop ``Higher Spin Gravity and its Application'' supported by the Asia Pacific Center for Theoretical Physics. The work of T.B. and E.S. was partially supported by the European Research Council (ERC) under the European Union’s Horizon 2020 research and innovation programme (grant agreement No 101002551) and by the Fonds de la Recherche Scientifique --- FNRS under Grant No. F.4544.21. The work of T.B. was also supported by the European Union’s Horizon 2020 research and innovation program under the Marie Sk\l{}odowska Curie grant agreement No 101034383.

\appendix
\section{BRST form of Segal}
\label{app:BRST}

Consider the BFV phase space with coordinates $(x^a,p_b)$
of ghost number $0$, and a ghost pair $(c,b)$,
i.e. of respective ghost number $+1$ and $-1$. 
We use the language of symbols and star product
(for the moment we assume Moyal--Weyl star product). 
The nonvanishing star-commutators between
these coordinates are
\begin{equation}
    \qcommut{x^a}{p_b}=\hbar\delta^a_b\,,
    \qquad
    \qcommut{c}{b}=\hbar\,,
\end{equation}
and the algebra of functions in these coordinates
is equipped with an anti-involution $\dag$ defined by
\begin{equation}
   \hbar^\dagger=-\hbar,
   \quad
   x^\dagger=x\,,
   \quad
   p^\dagger=p\,,
   \quad
   c^\dagger=c\,,
   \quad
   b^\dagger=b\,,
\end{equation}
which verifies $(AB)^\dagger=(-)^{|A||B|}B^\dagger A^\dagger$
where $\lvert\cdot\rvert$ denotes the ghost number.
A generic nilpotent Hermitian BFV charge has the form
\begin{equation}
    \Omega=c\,F(x,p)\,,
    \quad
    \Omega^\dagger=\Omega\,,
\end{equation}
with $F^\dagger=F$ also Hermitian. Now we view $\Omega$
as a generating  function of fields, and consider
the following gauge theory:
\begin{equation}
    \qcommut {\Omega}{\Omega}=0\,,
    \quad
    \delta_\Xi{\Omega}=\frac{1}{\hbar}\qcommut{\Omega}{\Xi}\,,
    \qquad
    \gh{\Xi}=0\,,
    \quad
    \Xi^\dagger=\Xi\,,
\end{equation}
where $\Xi$ is a generating function of gauge parameters.
In our case $\Xi=\xi(x,p) + cb\,w(x,p)$ with $\xi^\dagger=\xi$ 
and $w^\dagger=-w$. In terms of components the gauge 
transformations read as:
\begin{equation}
    \delta_{\xi,w} F = \frac{1}{\hbar}\qcommut{F}{\xi}
    + \scommut{F}{w}\,.
\end{equation}
In particular $F$ remains Hermitian. It is easy to see
that the above precisely encode the Segal's gauge 
transformations.  It follows that Segal's gauge
transformations are precisely the natural symmetries
of the constrained Hamiltonian systems describing
the particle model. In particular, HS diffeomorphisms 
correspond to a quantized version of the canonical 
transformations of the constrained surface,
while HS Weyl transformations correspond to 
redefinitions of the constraint. 

The above BFV--BRST interpretation of the off-shell Segal 
system was proposed in~\cite{Grigoriev:2006tt}
(see also~\cite{Bekaert:2013zya,Grigoriev:2021bes}).
It is a useful starting point to construct the respective
parent reformulation~\cite{Grigoriev:2006tt}, from which
the parent Segal system is obtained by gauge-fixing
the gauge fields associated to Weyl transformations.

\section{More on the star-Heaviside function}
\label{app:starHeaviside}
In this appendix, we recall how to evaluate the Heaviside
star-function, as introduced and explained by Segal
\cite{Segal:2002gd}. Let us start with the definition
of a star-function: given a usual function which admits
an integral representation of the form
\begin{equation}
    f(x) = \oint_C dt\,\tilde f(t)\,e^{t\,x}\,,
\end{equation}
where $C$ is some contour in the complex plane, and
$\tilde f(t)$ a given function, the corresponding
star-function will be defined as
\begin{equation}
    f_\ast(a) = \oint_C dt\,\tilde f(t)\,e_\ast^{t\,a}\,,
\end{equation}
for any elements $a \in \cA_{2n}$ of the Weyl algebra
and where
\begin{equation}
    e_\ast^{t\,a} = \sum_{k=0}^\infty\,
    \tfrac{t^k}{k!}\,a^{\ast k}\,,
    \qquad 
    t \in \C\,,
    \qquad 
    a^{\ast k} := \underbrace{a \ast \dots \ast a}_{k\,\text{times}}\,,
\end{equation}
is the star-exponential. Any such star function can be 
expanded as a formal power series in $\hbar$, of the form
\cite[Sec. 5.2]{Segal:2002gd}
\begin{eqnarray}\label{eq:expansion_star-function}
    f_\ast(a) = \sum_{n=0}^\infty \hbar^{2n}\,
    \sum_{k=2}^{2n}\,f^{(k)}(a)\,p_{n,k}(a)
\end{eqnarray}
where $f^{(k)}$ denotes the $k$th derivative of $f$ and
$p_{n,k}(a)$ are monomials of $k$ in the first
$4n$ derivatives of $a$ with respect to the variables
of the Weyl algebra. In particular, the Heaviside
star-function is given in terms of the integral representation
\begin{equation}
    \Theta_\ast(a) := \lim_{\epsilon\to0^+}\,\tfrac1{2i\pi}\,
    \int_{-\infty}^{+\infty}\,
    \tfrac{d\tau}{\tau-i\epsilon}\,e^{i\tau\,a}_\ast\,,
\end{equation}
and admits a similar expansion in $\hbar$.

The crucial property of the Heaviside star-function is
that, according to the expansion \eqref{eq:expansion_star-function},
it verifies
\begin{equation} \label{quantumdelta}
    \Theta'_\ast(a) \ast a = \delta_\ast(a) \ast a = 0\,,
\end{equation}
since the derivative of the Heaviside distribution is the Dirac
distribution. Note that the above identity should be understood
in the sense of distributions. Fortunately, this is enough
to prove the invariance under higher spin Weyl transformations
of the Segal action \eqref{segal-action}, since its variation reads
\begin{equation}
    \delta_w S[F] = 2\,\int d^nx\,d^np\,\,\delta_\ast(F) \ast F \ast w
    = 2\,\int d^nx\,d^np\,\big(\delta_\ast(F) \ast F\big)\, w=0\,,
\end{equation}
where the second equality is obtained upon disregarding 
a total derivative.

Since \eqref{quantumdelta} can raise some doubts, let us illustrate
that the quantum identity $\delta_\ast(a) \ast a = 0$ reduces to
the classical one $\delta(a) a=0$ and its derivatives. Indeed,
let us start with the general expansion of
$f_\ast(H)$:\footnote{In this Appendix $\{Y^A\}$, with $A=1,\dots,2n$,
collectively denotes the $2n$ variables of the Weyl algebra
$\cA_{2n}$, $\pl_A\equiv \pl/\pl Y^A$ and
$H_{AB}\equiv \pl_A \pl_B H$.}
\begin{align}
    f_\ast(H)&= f(H) + \hbar^2\left(\tfrac{1}{6} H_{AB}H^A H^B f'''(H) +\tfrac14 H_{AB} H^{AB} f''(H)\right)+\mathcal{O}(\hbar^4)\,.
\end{align}
Now, we can write down the expansion of $f_\ast(H)\ast H$ to the same order:
\begin{align}\begin{aligned}
    f_\ast(H)\ast H &= f(H)H +\hbar^2\left( \tfrac{1}{6} H_{AB}H^A H^B f'''(H) +\tfrac14 H_{AB} H^{AB} f''(H)\right)H+\\
    &+\hbar^2\left(\tfrac{1}{2} H_{AB}H^A H^B f''(H) +\tfrac12 H_{AB} H^{AB} f'(H)\right) +\mathcal{O}(\hbar^4)\,. 
\end{aligned}
\end{align}
Of course, this is just an expansion of $g_\ast(H)$, where $g(H)=f(H)H$. In particular, we can rearrange it with the help of $g'=f'H+f$, $g''=f''H+2f'$, etc., to find
\begin{align}
    f_\ast(H)\ast H&= g(H) +\hbar^2\left( \tfrac{1}{6} H_{AB}H^A H^B g'''(H) +\tfrac14 H_{AB} H^{AB} g''(H)\right)+\mathcal{O}(\hbar^4)\,.
\end{align}
Lastly, we take $f(H)=\delta(H)$ and observe that the leading term is just the classical identity $\delta(H)H=0$, while the subleading ones are derivatives of it. Obviously, all of this is a consequence of the fact that the map $\rho: f(H)\mapsto f_\ast(H)$ is a homomorphism from the subalgebra of $\cC^\infty(T^*\manX)$ generated by $H$ to the  $\ast$-product subalgebra of $W(\manX)$ generated by $H$. Therefore, $\rho$ maps $f(H)g(H)$ to $f_\ast (H)\ast g_\ast(H)$, which we apply to $f=\delta(H)$ and $g=H$.

\section{Modification of FFS cocycle}
\label{app:modification_FFS}
Let us denote by $\Phi$ the Chevalley--Eilenberg cocycle
associated with the FFS cocycle, whose expression
is detailed below. For the sake
of conciseness, we will rewrite it as
\begin{multline}
    \Phi(a_0;a_1,\dots,a_{2n})  = \\  \int_{u\in\Delta_{2n}}
    \Big[\mathscr{D}(\partial_{y_0},\partial_{p_0},
    \partial_{y_i},\partial_{p_i},u)\, a_0(y_0,p_0)\,
    a_1(y_1,p_1)\,\dots\,a_{2n}(y_{2n},p_{2n})\Big]\Big|_{y_0=y_i=p_0=p_i=0}\,,
\end{multline}
where $a_0,a_1,\dots,a_{2n} \in \cA_{2n}$ are elements
of the Weyl algebra, $\Delta_{2n}$ is the standard $2n$-simplex
which can be defined as
\begin{equation}
    \Delta_{2n} = \big\{(u_1,\dots,u_{2n}) \in [0,1]^{2n}\,\rvert\,
    0 \leq u_1 \leq u_2 \leq \dots \leq u_{2n} \leq 1\big\}\,,
\end{equation}
and $\mathscr{D}(\partial_{y_0},\partial_{p_0},\partial_{y_i},\partial_{p_i},u)$ is
a function of the partial derivatives with respect
to the Weyl algebra variables $\{y_0^a, p_{a0},y^a_i,p_{ai}\}$, 
with $i=0,\dots,2n$ and $a=1,\dots,n$, and the simplex
coordinates $u$. Explicitly, it is given by
\begin{equation}
    \mathscr{D}(\partial_{y_0},\partial_{p_0},
    \partial_{y_i},\partial_{p_i},u)
    = \exp\Big[\hbar\sum_{0 \leq k<l \leq 2n}
    (\tfrac12+u_k-u_l)\,(\partial_{y_k} \cdot \partial_{p_l}
    -\partial_{p_k} \cdot \partial_{y_l})\Big]\,
    \det(\partial_{y_i},\partial_{p_i})\,,
\end{equation}
where $\partial_{y_i} \cdot \partial_{p_j}
= \tfrac{\partial}{\partial y^a_i}\,\tfrac{\partial}{\partial p_{aj}}$,
and $u_0=0$ by convention. Note 
that the determinant part of this operator {\it does not}
acts on the zeroth argument ($a_0$) of $\Phi$.

From the above cocycle, we can define a new one,
simply by not setting the $p_{ai}$ variables to zero
in the above expression but to the same value $p_a'$
for all $i=0,\dots,2n$, i.e.
\begin{multline}
    \tilde\Phi(a_0;a_1,\dots,a_{2n})(p') = \\ \int_{u\in\Delta_{2n}}
    \Big[\mathscr{D}(\partial_{y_0},\partial_{p_0},
    \partial_{y_i},\partial_{p_i},u)\,
    a_0(y_0,p_0)\, a_1(y_1,p_1)\,\dots\,
    a_{2n}(y_{2n},p_{2n})\Big]\Big|_{y_0=y_i=0,p_0=p_i=p'}\,,
\end{multline}
for any $a_0,a_1,\dots,a_{2n} \in \cA_{2n}$.
\begin{lemma}
    The map $\tilde\Phi$ defined above is a Chevalley--Eilenberg
    cocycle of degree $2n$ for the Lie algebra $(\cA_{2n},[-,-]_\ast)$
    associated with the Weyl algebra, with values in its dual
    whose coefficients are extended to the algebra of polynomials
    $\C[p_a']$ in $n$ variables.
\end{lemma}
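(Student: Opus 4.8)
The plan is to recognise $\tilde\Phi$ as the pullback of the known Feigin--Felder--Shoikhet cocycle $\Phi$ along a homomorphism of Lie algebras, and then to invoke the functoriality of Chevalley--Eilenberg cohomology under such homomorphisms (combined with extension of scalars).

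First I would observe that $\mathscr{D}(\partial_{y_0},\partial_{p_0},\partial_{y_i},\partial_{p_i},u)$ is a constant-coefficient differential operator in all of the variables $y_i^a,p_{ai}$: it is the exponential of a sum of the operators $\partial_{y_k}\cdot\partial_{p_l}-\partial_{p_k}\cdot\partial_{y_l}$, times the determinant $\det(\partial_{y_i},\partial_{p_i})$, all of which commute with translations in $p$. Hence evaluating the result at $p_0=p_i=p'$ rather than at $0$ is the same as first translating each argument; introducing the map $T_{p'}\colon\cA_{2n}\to\cA_{2n}\otimes\C[p']$ with $(T_{p'}a)(y,p)=a(y,p+p')$ (well defined since $a$ is polynomial in $p$), one obtains
\[
  \tilde\Phi(a_0;a_1,\dots,a_{2n})(p')=\Phi\big(T_{p'}a_0;T_{p'}a_1,\dots,T_{p'}a_{2n}\big),
\]
where on the right $\Phi$ is understood as extended $\C[p']$-linearly, which makes sense order by order in $\hbar$. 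One then checks that $T_{p'}$ is a homomorphism of associative --- hence of Lie --- algebras from $(\cA_{2n},\ast)$ to $(\cA_{2n}\otimes\C[p'],\ast)$: the Moyal product involves only the translation-invariant operators $\partial_{y^a},\partial_{p_a}$ and $p'$ is central, so $T_{p'}(a\ast b)=(T_{p'}a)\ast(T_{p'}b)$, and $T_{p'}$ is moreover injective.

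It then remains to assemble the functoriality argument. The $\C[p']$-linear extension $\Phi_{\C[p']}$ of $\Phi$ is still a cocycle, since the Chevalley--Eilenberg (or Hochschild) differential acts only on the $\cA_{2n}$-tensor factors and therefore commutes with extension of scalars from $\C$ to $\C[p']$; it is a $2n$-cocycle of $\cA_{2n}\otimes\C[p']$ with values in $\cA_{2n}^*\otimes\C[p']$ equipped with the coadjoint action. Restricting this coefficient action along $T_{p'}$ produces exactly the $\cA_{2n}$-module appearing in the statement, because $T_{p'}$ intertwines the brackets; and the pullback of a cocycle along a Lie-algebra homomorphism, together with the compatible morphism of coefficient modules, is again a cocycle. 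Thus $\tilde\Phi=T_{p'}^{*}\Phi_{\C[p']}$ is a Chevalley--Eilenberg $2n$-cocycle of $(\cA_{2n},[-,-]_\ast)$ valued in $\cA_{2n}^*\otimes\C[p']$. Concretely, expanding the coboundary $d\tilde\Phi(a_0;a_1,\dots,a_{2n+1})$ and using $T_{p'}[a,b]_\ast=[T_{p'}a,T_{p'}b]_\ast$ turns each term into the corresponding term of $d\Phi_{\C[p']}$ evaluated on the translated arguments, and the total vanishes because $\Phi$ is a cocycle; the same manipulation commutes with antisymmetrisation in the last $2n$ slots, so it is irrelevant whether one starts from the Hochschild or the Chevalley--Eilenberg form of $\Phi$.

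The only point that genuinely needs verification rather than citation is the identification of the $p$-shift with precomposition by $T_{p'}$, together with the attendant care that $\Phi_{\C[p']}$ and its cocycle property are well defined in the formal ($\hbar$-adic, $p$-polynomial) setting; once that is in place everything else is the standard functoriality of Lie algebra cohomology. I therefore expect no serious obstacle beyond this bookkeeping, plus fixing once and for all the sign/normalisation convention for the coadjoint module so that ``pullback of a cocycle is a cocycle'' applies verbatim.
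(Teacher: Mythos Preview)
Your proposal is correct and follows essentially the same approach as the paper: both identify $\tilde\Phi(a_0;a_1,\dots,a_{2n})(p')=\Phi(T_{p'}a_0;T_{p'}a_1,\dots,T_{p'}a_{2n})$ via the translation-invariance of the Moyal product, observe that $T_{p'}$ is a homomorphism (the paper says ``automorphism'' for each fixed $p'$), and conclude by functoriality that the cocycle condition is inherited. Your treatment of the extension of scalars to $\C[p']$ is slightly more explicit than the paper's, but the argument is the same.
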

\begin{proof}
    This simply follows from the fact that $\tilde\Phi$
    is obtained by pre-composing each one of the arguments
    of $\Phi$ by an automorphism of the Weyl algebra $\cA_{2n}$.
    Indeed, $\tilde\Phi$ is simply obtained from $\Phi$
    by shifting its argument by a parameter $p'$,
    i.e. $a_i(y_i,p_i) \to a_i(y_i,p_i+p')$.
    In other words,
    \begin{equation}
        \tilde\Phi(a_0;a_1,\dots,a_{2n})(p')
        = \Phi\big(T_{p'}(a_0);T_{p'}(a_1),\dots,
        T_{p'}(a_{2n})\big)\,,
    \end{equation}
    where we introduced the operator $T_{p'}(a)(y,p) := a(y,p+p')$,
    for any $a \in \cA_{2n}$. This operator is an automorphism
    of the Weyl algebra: the Moyal--Weyl star-product
    is invariant under $Sp_{2n} \ltimes \R^{2n}$,
    the semi-direct product of the symplectic group with
    the abelian group of translation in $2n$-dimensions,
    and $T_{p'}$ is nothing but the operator representing
    the abelian subgroup of $n$-dimensional translations.
    Since $\tilde\Phi$ is simply the composition
    of the cocycle $\Phi$ with automorphisms of the Weyl algebra,
    it follows directly that $\tilde\Phi$ verifies
    the same cocycle condition as $\Phi$, hence the lemma.
\end{proof}

Next, one can define a multilinear map
\begin{equation}
    \mu: \cA^{\,}_{2n} \otimes \cA_{2n}^{\wedge n} \to \C[p_a]\,,
\end{equation}
by the formula
\begin{equation}
    \mu(a_0|a_1, \dots, a_n)(p) := \tfrac1{n!}\,
    \epsilon_{b_1 \dots b_n}\,
    \tilde\Phi(a_0;a_1,\dots,a_d,y^{b_1},\dots,y^{b_n})(p)\,,
\end{equation}
for any $a_0,a_1,\dots,a_n \in \cA_{2n}$. Note that $\mu$
verifies\footnote{The map $\mu$ also vanishes identically
if at least one of its $n$ last arguments depends only
on the $y$ variables.}
\begin{equation}
    \tilde\Phi(f;p_{a_1},\dots,p_{a_n}) = \tfrac1{n!}\,
    \epsilon_{a_1 \dots a_n}\,f\rvert_{y=0}\,,
\end{equation}
for any $f \in \cA_{2n}$, as a consequence of the normalisation
condition \cite[Sec. 4.2, IV]{Feigin2005} of $\Phi$.
\begin{lemma}
    The map $\mu$ defined above verifies
    \begin{equation}\label{eq:almost_cocycle}
        \begin{aligned}
            \tfrac{\partial}{\partial p_a}\varphi_a(a_{-1}|a_0,\dots,a_n)
            & = \sum_{i=0}^n\,(-1)^i\,
            \mu([a_{-1},a_i]_\ast|a_0,\dots,\hat a_i,\dots,a_n) \\
            & \qquad + \sum_{i<j} (-1)^{i+j}\,\mu(a_{-1};[a_i,a_j]_\ast,
            a_0,\dots,\hat a_i,\dots,\hat a_j,\dots,a_n)\,,
        \end{aligned}
    \end{equation}
    with
    \begin{equation}
        \varphi_a(a_{-1}|a_0,\dots,a_n) := \tfrac{(-1)^{n-1}}{(n-1)!}\,
        \epsilon_{a b_1 \dots b_{n-1}}\,\tilde\Phi(a_{-1};a_0,\dots,a_n,
        y^{b_1},\dots,y^{b_{n-1}})\,,
    \end{equation}
    and for any $a_{-1},a_0,\dots,a_n \in \cA_{2n}$.
\end{lemma}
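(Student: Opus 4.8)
The plan is to deduce the identity directly from the Chevalley--Eilenberg cocycle property of $\tilde\Phi$ established in the previous lemma, specialising its arguments to be partly the $a_i$'s and partly the linear elements $y^{b}$, and then recognising the part that is not already a combination of values of $\mu$ as a total $p$-derivative. Concretely, since $\tilde\Phi$ is a degree-$2n$ Chevalley--Eilenberg cochain of $(\cA_{2n},[-,-]_\ast)$ valued in the coadjoint module with coefficients extended to $\C[p']$, applying $d_{\mathrm{CE}}\tilde\Phi=0$ to the $2n+1$ elements $(x_0,\dots,x_{2n}):=(a_0,\dots,a_n,y^{b_1},\dots,y^{b_n})$ and pairing the result against $a_{-1}$ in the coadjoint slot gives
\begin{multline*}
  \sum_{i=0}^{2n}(-1)^i\,\tilde\Phi\big([a_{-1},x_i]_\ast;x_0,\dots,\hat x_i,\dots,x_{2n}\big)\\
  +\sum_{0\le i<j\le 2n}(-1)^{i+j}\,\tilde\Phi\big(a_{-1};[x_i,x_j]_\ast,x_0,\dots,\hat x_i,\dots,\hat x_j,\dots,x_{2n}\big)=0 ,
\end{multline*}
valid pointwise in $p'$; the only input here is that the $y^{b}$ are genuine elements of $\cA_{2n}$, so they may occupy argument slots.

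Next I would contract this identity with $\tfrac1{n!}\,\epsilon_{b_1\dots b_n}$ and split each sum according to whether the running index labels an $a$-argument (index in $\{0,\dots,n\}$) or a $y$-argument (index in $\{n+1,\dots,2n\}$). By the very definition of $\mu$, the terms in which no $y$-slot enters a bracket are exactly
\begin{multline*}
  \sum_{i=0}^n(-1)^i\,\mu\big([a_{-1},a_i]_\ast\,\big|\,a_0,\dots,\hat a_i,\dots,a_n\big)\\
  +\sum_{0\le i<j\le n}(-1)^{i+j}\,\mu\big(a_{-1}\,\big|\,[a_i,a_j]_\ast,a_0,\dots,\hat a_i,\dots,\hat a_j,\dots,a_n\big) ,
\end{multline*}
i.e. the left-hand side of the asserted equation. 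Brackets of two $y$-arguments drop out because $[y^{b},y^{b'}]_\ast=0$, and in every remaining term exactly one bracket has the form $[c,y^{b_m}]_\ast=-\hbar\,\partial c/\partial p_{b_m}$ with $c\in\{a_{-1},a_0,\dots,a_n\}$, so it becomes a $p$-derivative of a single argument. It remains to show that the sum of these leftover terms is $\partial\varphi_a/\partial p_a$.

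To do so I would first use the total antisymmetry of $\epsilon_{b_1\dots b_n}$ and of $\tilde\Phi$ to bring each leftover term to a standard form: pulling the contracted index $b_m$ to the front (a sign $(-1)^{m-1}$), relabelling it $a$, and noticing that the $n$ values of $m$ contribute equally, one finds that $\tfrac1{n!}\epsilon_{b_1\dots b_n}\sum_m$ collapses to $\tfrac1{(n-1)!}\epsilon_{a b_1\dots b_{n-1}}$. On the other hand, since $\tilde\Phi(\dots)(p')=\Phi\big(T_{p'}(\dots)\big)$ depends on $p'$ only through the simultaneous shift $c(y,p)\mapsto c(y,p+p')$, the Leibniz rule gives
\begin{multline*}
  \frac{\partial}{\partial p_a}\,\tilde\Phi\big(a_{-1};a_0,\dots,a_n,y^{b_1},\dots,y^{b_{n-1}}\big)=\tilde\Phi\big(\partial_{p_a}a_{-1};a_0,\dots,a_n,y^{b_1},\dots,y^{b_{n-1}}\big)\\
  +\sum_{i=0}^n\tilde\Phi\big(a_{-1};a_0,\dots,\partial_{p_a}a_i,\dots,a_n,y^{b_1},\dots,y^{b_{n-1}}\big) ,
\end{multline*}
the $y$-arguments contributing nothing. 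Matching the reorganised leftover terms against this expression (one term with a $p$-derivative of $a_{-1}$, plus one for each $a_i$) identifies their sum with $\partial\varphi_a/\partial p_a$ for the $\varphi_a$ given in the statement.

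The step I expect to be the main obstacle is the sign and normalisation bookkeeping through the contraction and reorganisation: one must simultaneously keep track of the Koszul signs of the antisymmetrisation, the $(-1)^i$ from deleting arguments, the $(-1)^{m-1}$ from reordering the $\epsilon$-tensor, and the $\hbar$-weights produced by $[c,y^{b_m}]_\ast=-\hbar\,\partial_{p_{b_m}}c$, and check that they conspire to reproduce exactly the prefactor $(-1)^{n-1}/(n-1)!$ in $\varphi_a$; this is also the point at which the precise conventions for the $\ast$-commutator, the orientation of the simplex in $\mathscr{D}$, and the normalisation of the FFS cocycle $\Phi$ enter and must be fixed consistently. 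A subordinate check is that the reshufflings of the $y$-arguments never produce spurious terms, which is ensured by the total antisymmetry of $\tilde\Phi$ in its last $2n$ slots.
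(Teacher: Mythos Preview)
Your proposal is correct and follows essentially the same route as the paper: start from the Chevalley--Eilenberg cocycle identity for $\tilde\Phi$ evaluated on $(a_0,\dots,a_n,y^{b_1},\dots,y^{b_n})$, contract with $\tfrac1{n!}\epsilon_{b_1\dots b_n}$, separate the terms according to whether a $y$-argument enters a bracket, and then use $[\,\cdot\,,y^a]_\ast\propto -\partial/\partial p_a$ together with the fact that this derivative can be pulled outside $\tilde\Phi$ to recognise the remainder as $\partial_{p_a}\varphi_a$. Your Leibniz-rule explanation of why the $p$-derivative factors out (via $\tilde\Phi=\Phi\circ T_{p'}$) is precisely what the paper means by ``the latter can be factored out of the expression of $\tilde\Phi$'', and your anticipated sign/normalisation bookkeeping is indeed the only nontrivial checking left, which the paper also leaves implicit.
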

\begin{proof}
    This is a direct consequence of the fact that $\tilde\Phi$
    is a Chevalley--Eilenberg cocycle. Indeed, starting from
    \begin{equation}
      0=\tfrac1{n!}\,\epsilon_{b_1 \dots b_n}\,
      (\delta\tilde\Phi)(a_{-1};a_0,a_1,\dots,a_n,y^{b_1},\dots,y^{b_n})
    \end{equation}
    where $\delta$ denotes the Chevalley--Eilenberg differential,
    one finds
    \begin{equation}
      \begin{aligned}
        0 & = \sum_{i=0}^n\,(-1)^i\,
        \mu([a_{-1},a_i]_\ast|a_0,\dots,\hat a_i,\dots,a_n) \\
        & \qquad + \sum_{i<j} (-1)^{i+j}\,\mu(a_{-1};[a_i,a_j]_\ast,
        a_0,\dots,\hat a_i,\dots,\hat a_j,\dots,a_n) \\
        & \quad + \tfrac{(-1)^{n-1}}{(n-1)!}\,\epsilon_{ab_1 \dots b_{n-1}}\,
        \Big(\tilde\Phi([a_{-1},y^a]_\ast;a_0,a_1\dots,a_n,
        y^{b_1},\dots,y^{b_{n-1}}) \\
        & \qquad + \sum_{k=0}^n\,(-1)^k\,
        \tilde\Phi(a_{-1};[a_k,y^a]_\ast,a_0,\dots,\hat a_k, \dots, a_n,
        y^{b_1}, \dots, y^{b_{n-1}})\big)\,,
      \end{aligned}
    \end{equation}
    which, upon using $[-,y^a]_\ast = -\tfrac{\partial}{\partial p_a}$
    and the fact that the latter can be factored out of the expression
    of $\tilde\Phi$, reproduces \eqref{eq:almost_cocycle}.
\end{proof}
In plain words, $\mu$ is a Chevalley--Eilenberg cocycle,
up to a total derivative in $p$. Moreover, this property
implies that the extension of $\mu$ to forms on any manifold
taking values in the Weyl algebra can be used to define a trace
on the algebra of covariantly constant sections with respect to
any given flat connection $A$.
\begin{corollary}\label{cor:mu}
    Let $\manX$ be a smooth manifold, and $A$ a flat connection
    on its Weyl bundle, associated to $T^*\manX$. Then, for any pair of covariantly
    constant sections $F, G$ with respect to $A$,
    the $\Omega(\manX)$-linear extension of $\mu$ verifies
    the cyclicity condition
    \begin{equation}\label{eq:reduced_cyclicity}
        \mu([F,G]_\ast|A,\dots,A)\
        \propto\ n\hbar\,d\mu(F|G,A,\dots,A)
        + \tfrac{\partial}{\partial p_a}\varphi_a(F|G,A,\dots,A)\,,
    \end{equation}
    as well as the gauge invariance condition
    \begin{equation}\label{eq:reduced_gauge_invariance}
        \delta_\xi\mu(F|A,\dots,A)\
        \propto\ n\,d\mu(F|\xi,A,\dots,A)
        + \tfrac{\partial}{\partial p_a}\varphi_a(F|\xi,A,\dots,A)\,,
    \end{equation}
    for any gauge parameter $\xi$ (i.e. any section of
    the Weyl bundle, not necessarily covariantly constant).
\end{corollary}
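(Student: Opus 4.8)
The whole statement is obtained by specializing the $\Omega(\manX)$-linear extension of the ``almost cocycle'' identity \eqref{eq:almost_cocycle} of the preceding Lemma, and then reorganizing with the Leibniz rule for $d$ together with the flatness of $A$ and the covariant constancy of $F$ (and $G$). For \eqref{eq:reduced_cyclicity} the plan is: set $a_{-1}=F$, $a_0=G$, $a_1=\dots=a_n=A$ in \eqref{eq:almost_cocycle}; by total antisymmetry of $\mu$ in its last $n$ slots the $n$ terms of the first sum with $i\geq1$ collapse to a single term $\propto n\,\mu([F,A]_\ast|G,A,\dots,A)$, the $n$ pairs $(0,j)$ of the double sum to $\propto n\,\mu(F|[G,A]_\ast,A,\dots,A)$, and the $\binom n2$ pairs $(i,j)$ with $1\leq i<j\leq n$ to $\propto\binom n2\,\mu(F|[A,A]_\ast,G,A,\dots,A)$. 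Independently, the Leibniz rule applied to the $(n-1)$-form $\mu(F|G,A,\dots,A)$ together with $\hbar\,dF=[F,A]_\ast$, $\hbar\,dG=[G,A]_\ast$, $2\hbar\,dA=-[A,A]_\ast$ (i.e. \eqref{parentEOM}) gives
\[
  n\hbar\,d\mu(F|G,A,\dots,A)\ \propto\ n\,\mu([F,A]_\ast|G,A,\dots,A)+n\,\mu(F|[G,A]_\ast,A,\dots,A)-\tbinom n2\,\mu(F|[A,A]_\ast,G,A,\dots,A)\,,
\]
so the three ``commutator'' groups assemble to $\mp n\hbar\,d\mu(F|G,A,\dots,A)$; since the left-hand side of \eqref{eq:almost_cocycle} is the fiberwise total $p$-derivative $\tfrac{\partial}{\partial p_a}\varphi_a(F|G,A,\dots,A)$ and the only remaining term is $\mu([F,G]_\ast|A,\dots,A)$, relation \eqref{eq:reduced_cyclicity} follows.

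For \eqref{eq:reduced_gauge_invariance} the plan is similar. First expand $\delta_\xi\mu(F|A,\dots,A)$ using $\delta_\xi F=\tfrac1\hbar[F,\xi]_\ast$ and $\delta_\xi A=d\xi+\tfrac1\hbar[A,\xi]_\ast$; antisymmetry merges the $n$ contributions of $\delta_\xi A$, leaving $\tfrac1\hbar\mu([F,\xi]_\ast|A,\dots,A)+n\,\mu(F|d\xi,A,\dots,A)+\tfrac n\hbar\mu(F|[A,\xi]_\ast,A,\dots,A)$. Then apply \eqref{eq:almost_cocycle} with $a_{-1}=F$, $a_0=\xi$, $a_1=\dots=a_n=A$ to trade $\tfrac1\hbar\mu([F,\xi]_\ast|A,\dots,A)$ for a fiberwise total $p$-derivative plus terms $\propto\mu([F,A]_\ast|\xi,\dots)$, $\propto\mu(F|[\xi,A]_\ast,\dots)$ and $\propto\mu(F|[A,A]_\ast,\xi,\dots)$; the $[\xi,A]_\ast$ term cancels (after a reordering) the $[A,\xi]_\ast$ term coming from $\delta_\xi A$, and the remainder --- by Leibniz applied this time to $\mu(F|\xi,A,\dots,A)$, using covariant constancy of $F$ and flatness of $A$ but \emph{no} constraint on $\xi$, which is why the $d\xi$ term survives and no $\hbar$ appears in front --- reassembles into $n\,d\mu(F|\xi,A,\dots,A)$. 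This is \eqref{eq:reduced_gauge_invariance}.

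The conceptual input is thus entirely in the Lemma and in \eqref{parentEOM}, and everything else is bookkeeping, which is where I expect the only real difficulty. The main (routine) obstacle is tracking the Koszul signs generated by the $\Omega(\manX)$-linear extension --- $A$ being an odd, $1$-form--valued argument --- and checking that the combinatorial factors $1,n,\binom n2$ and the powers of $\hbar$ line up so that the three commutator groups genuinely assemble into $n\hbar\,d\mu$ (resp.\ $n\,d\mu$) rather than into some other combination. One should also record that the terms $\tfrac{\partial}{\partial p_a}\varphi_a(\dots)$ are honest fiberwise total $p$-derivatives --- this is already built into the Lemma, since $\varphi_a\in\C[p]$ arises from the identity $[-,y^a]_\ast=-\partial_{p_a}$ --- so that they are harmless once the form identities are integrated over the $p$-fiber in \eqref{eq:cov-action}, as needed for the gauge invariance of the covariant action.
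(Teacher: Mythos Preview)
Your proposal is correct and follows essentially the same route as the paper's own proof: specialize the almost-cocycle identity \eqref{eq:almost_cocycle} with $a_{-1}=F$, $a_0=G$ (resp.\ $\xi$), $a_1=\dots=a_n=A$, collapse the repeated-argument terms by antisymmetry to get the combinatorial factors $n$ and $\binom n2$, then use $[F,A]_\ast=\hbar\,dF$, $[G,A]_\ast=\hbar\,dG$, $[A,A]_\ast=-2\hbar\,dA$ to recognize the Leibniz expansion of $d\mu(F|G,A,\dots,A)$ (resp.\ $d\mu(F|\xi,A,\dots,A)$); for the gauge-invariance part you also correctly note the cancellation of the $[\xi,A]_\ast$ term against the $[A,\xi]_\ast$ contribution from $\delta_\xi A$, and that the surviving $d\xi$ term is precisely what completes the Leibniz rule without any assumption on $\xi$. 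The paper carries out exactly these steps, so your caution about Koszul signs and the precise powers of $\hbar$ is well placed but there is no missing idea.
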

\begin{proof}
    The proof is almost identical to that of \cite[Prop. 4.2]{Feigin2005}.
    For the sakes of completeness, let us repeat it.

    First, let us use the almost-cocycle condition
    \eqref{eq:almost_cocycle} verified by $\mu$ to write
    \begin{equation}
      \begin{aligned}
        \mu([F,G]_\ast|A,\dots,A)
        & = n\,\Big(\mu([F,A]_\ast|G,A,\dots,A) + \mu(F|[G,A]_\ast,A,\dots,A) \\
        & \qquad - \tfrac{(n-1)}2\,\mu(F|[A,A]_\ast,G,A,\dots,A)\Big)
        + \tfrac{\partial}{\partial p_a}\varphi_a(F|G,A,\dots,A)
      \end{aligned}
    \end{equation}
    where the dimension-dependent coefficients appear simply
    for combinatorial reason (several terms are identical
    since many of the arguments of $\mu$ are the same). Using
    the flatness of $A$ and covariant constancy of $F$ and $G$,
    we can replace all brackets by a differential term,
    \begin{equation}
      \begin{aligned}
        \mu([F,G]_\ast|A,\dots,A)
        & = n\hbar\,\Big(\mu(dF|G,A,\dots,A) + \mu(F|dG,A,\dots,A)\\
        & \qquad + (n-1)\,\mu(F|G,dA,\dots,A)\Big)
        + \tfrac{\partial}{\partial p_a}\varphi_a(F|G,A,\dots,A)\,,
      \end{aligned}
    \end{equation}
    which reproduces the exact term in \eqref{eq:reduced_cyclicity}.

    Similarly, the gauge variation
    \begin{equation}\label{eq:interm}
      \delta_\xi\mu(F|A,\dots,A)
      = \tfrac1\hbar\,\mu([F,\xi]_\ast|A,\dots,A)
      +  n\,\mu(F|d\xi+\tfrac1\hbar\,[A,\xi]_\ast,A,\dots,A)
    \end{equation}
    can be recast as
    \begin{equation}
      \begin{aligned}
        \delta_\xi\mu(F|A,\dots,A)
        & = n\,\Big(\mu(dF|\xi,A,\dots,A)
        + \tfrac1\hbar\,\mu(F|[\xi,A]_\ast,A,\dots,A) \\
        & \qquad  + (n-1)\,\mu(F|\xi,dA,\dots,A)
        + \mu(F|d\xi+\tfrac1\hbar\,[A,\xi]_\ast,A,\dots,A)\Big) \\
        & \qquad\qquad + \tfrac{\partial}{\partial p_a}\varphi_a(F|\xi,A,\dots,A)\,,
      \end{aligned}
    \end{equation}
    upon using the almost-cocycle condition on the first term
    of equation \eqref{eq:interm}, as well as the fact that $A$
    is flat and $F$ covariantly constant. The two terms containing
    a commutator $[A,\xi]_\ast$ cancel one another, and the remaining
    terms add up to give \eqref{eq:reduced_gauge_invariance}.
\end{proof}

\section{Quantization of the cotangent bundle}
\label{app:Fedosov}

\paragraph{From off-shell Segal to the quantization
of the cotangent bundle.}
The off-shell Segal system presented in Section
\bref{sec:parent_Segal} can be interpreted as a deformation
quantization of the cotangent bundle $T^*\manX$ over a given
manifold $\manX$, which we think of as spacetime here. Indeed,
recall that the algebra of functions on the cotangent bundle
which are polynomial in the momenta (fiber coordinates)
is isomorphic to the algebra of symbols $\cS(\manX)$
of differential operators on $\manX$. Finding such an isomorphism
(of vector spaces) between the space of symbols and
the space of differential operators $\cD(\manX)$ on $\manX$,
\begin{equation}
    \begin{tikzcd}
        \sigma: \cD(\manX)
        \ar[r, "\sim"]
        & \cS(\manX)\,,
    \end{tikzcd}
\end{equation}
amounts to a quantization of the cotangent bundle
(modulo some additional conditions, see e.g. \cite{Bekaert:2021sfc}
for more details and recent application in the context
of higher spin gravity) in the sense that it allows one
to define a star product on symbols, and hence on the algebra
of polynomial functions on $T^*\manX$, via the composition
of differential operators, i.e.
\begin{equation}
    \sigma(\hat D_1) \star \sigma(\hat D_2)
    = \sigma(\hat D_1 \circ \hat D_2)\,,
\end{equation}
for any differential operators $\hat D_1, \hat D_2 \in \cD(\manX)$.
For instance, in the case of flat space $\manX=\R^n$,
this is nothing but an ordering prescription (e.g. Weyl
or normal ordering). The connection $1$-form $A$
in the off-shell Segal system allows to define
such an isomorphism as follows: assuming that
$A$ is of the form
\begin{equation}
    A = dx^\mu\,e_\mu^a\,p_a + \dots
\end{equation}
where $x^\mu$ are coordinates on $\manX$, the dots denote terms
of higher orders in $y$ and $p$, and $e_\mu^a$ is invertible,
then there is a bijection between differential operators on $\manX$
and sections of the bundle of Weyl algebra (in the variables
$(y^a,p_a)$ with $a=1,\dots,\dim \manX=n$) over the manifold $\manX$,
which are annihilated by
\begin{equation}
    D := d + \tfrac1\hbar\,[A,-]_\ast\,,
\end{equation}
where $d$ denotes the de Rham differential on $\manX$ and $\ast$
the fiberwise Moyal--Weyl product (meaning, between the $(y^a,p_a)$
variables). Indeed, the equation $Df(x,p;y)=0$, can be solved
order by order in $y$ (and $\hbar$) thanks to the fact that
$e_\mu^a$ has an inverse (see Appendix \bref{app:proof}
for more details). In other words, the $y$-dependency
of any function annihilated by $D$ can be uniquely reconstructed,
so that
\begin{equation}
    Df = 0 
    \qquad \Leftrightarrow \qquad
    f = \tau(f_0)\,,
    \qquad
    f_0(x,p) = f(x,p;0)\,,
\end{equation}
where $\tau$ is a bijection (whose inverse consists in
setting $y=0$). Now on the one hand, functions depending
only on $x^\mu$ and $p_a$ are nothing but polynomial functions
on the cotangent bundle, expressed in a coordinate system
wherein the tautological (or Liouville) $1$-form
reads $\vartheta = dx^\mu\,e^a_\mu\,p_a$. On the other hand,
such functions identify with symbols of {\it fiberwise 
differential operators}, by which we mean differential 
operators in the $y$-variables, whose coefficients are
smooth functions on $\manX$ and formal series in $y$. As it turns out,
Dolgushev showed that the space of such symbols, which are
annihilated by the differential $D$, are in bijection with
differential operators on $\manX$ \cite[Th. 3 and Prop. 1]{Dolgushev:2003fg}.
In other words, the connection $1$-form in the off-shell
Segal system allows us to obtain a quantization of
the cotangent bundle, by establishing an isomorphism between
functions on $T^*\manX$ (which are polynomial in momenta),
and differential operators.

Considering that the cotangent bundle $T^*\manX$ of any manifold $\manX$
is symplectic, one can also quantize it using Fedosov's method
\cite{Fedosov:1994zz}. As it turns out, the question of 
the precise relation between these two approaches was studied
by Fedosov himself in \cite{Fedosov2001}. The result is that,
given a quantization of the cotangent bundle in the sense described
in the previous paragraphs, one can build a Fedosov quantization
of the cotangent bundle $T^*\manX$ as follows.
\begin{itemize}
\item First, one should define a Fedosov connection
on the Weyl algebra bundle over the symplectic manifold $T^*\manX$. It therefore defines a differential
on the space of forms valued in $\cA_{2n}$
that should take the form
\begin{equation}
    \tilde D = \tilde d + [\tilde A,-]_\ast\,,
    \label{eq:tilde_Fedosov}
\end{equation}
where $\tilde d$ denotes the de Rham differential
on $T^*\manX$ and $\tilde A$ is a $\cA_{2n}$-valued
connection $1$-form on $T^*\manX$, whose pieces linear in $y$
and $p$ read
\begin{equation}
    \tilde A = -d\pi_a\,y^a + dx^\mu\,e_\mu^a\,p_a
    + \dots\,,
\end{equation}
in a patch with coordinates $(x^\mu,\pi_a)$ of the cotangent
bundle. Such a $1$-form can be constructed from the $1$-form
in the parent Segal system, via
\begin{equation}
    \tilde A(x,\pi;y,p) = -d\pi_a\,y^a + A'\,,
    \qquad
    A' := A(x;y,p+\pi)\,.
\end{equation}
The expression \eqref{eq:tilde_Fedosov} does square to zero
since
\begin{equation}
    \tilde d \tilde A + \tfrac12\,[\tilde A, \tilde A]_\ast
    = 0
    \qquad \Leftrightarrow \qquad
    \left\{
    \begin{aligned}
        0 & = (\tfrac{\partial}{\partial \pi_a}
        - \tfrac{\partial}{\partial p_a})\,A_\mu'\,,\\
        0 & = dA' + \tfrac12\,[A',A']_\ast\,.
    \end{aligned}
    \right.
\end{equation}
The first equation is satisfied as a consequence of
the particular dependency of $A'$ in $p$ and $\pi$,
and the second one by virtue of the fact that both
the action of the de Rham differential and the Moyal--Weyl
star product commute with translation in $p$\,\footnote{In fact,
one can characterize the Moyal--Weyl star product
as the unique star product on $\R^{2n}$ which is 
$\mathfrak{isp}(2n,\R)$-equivariant \cite{Gutt1983}.},
and therefore this second equation follows from flatness
of $A$. This is also consistent with the fact that
the Fedosov connection $\tilde A$ leads to a quantization
of the cotangent bundle equipped with its canonical
symplectic form $\tilde d(e^a\,\pi_a)$. Indeed, adding
the central term $-e^a\,\pi_a$ to $\tilde A$ results in
the minimal Fedosov connection whose curvature is
$\tilde d(e^a\,\pi_a)$, in accordance with the standard
Fedosov quantization.

\item Second, one should find covariantly constant sections 
of this differential $\tilde D$, i.e. $0$-forms $\tilde F$
such that
\begin{equation}
    \tilde D \tilde F = 0
    \qquad \Leftrightarrow \qquad
    \left\{
        \begin{aligned}
            0 & = \big(\tfrac{\partial}{\partial \pi_a}
            - \tfrac{\partial}{\partial p_a}\big)\,\tilde F\,,\\
            0 & = d \tilde F + [A', \tilde F]_\ast\,,
        \end{aligned}
    \right.
\end{equation}
where the above equation has been split into two according to
the natural basis $(dx^\mu, d\pi_a)$ of one-forms on $T^*\manX$.
The first equation is simplify solved by
\begin{equation}
    \tilde F(x,\pi;y,p) = F(x;y,p+\pi)\,,
\end{equation}
and hence the second equation becomes equivalent to $DF = 0$.
\end{itemize}
In summary, the parent Segal system can be extended
into a Fedosov differential and a covariantly constant
section of the Weyl bundle over $T^*\manX$.

\paragraph{Traces in Fedosov's quantization.}
As previously recalled, the basic principle of Fedosov's
quantization is to establish an isomorphism, usually
denoted $\tau$, between functions on a symplectic manifold
$(M,\omega)$ and sections of the Weyl bundle which
are annihilated by the Fedosov differential
$D_M := d_M + [A_M,-]_\ast$, where $d_M$ denotes the de Rham
differential on $M$, so as to define the star product $\star$
on $\cC^\infty(M)$ as the pullback of the fiberwise
Moyal--Weyl product $\ast$ of the corresponding sections
of the Weyl bundle (under the action of $\tau$).
In other words,
\begin{equation}
    \begin{tikzcd}
        \tau:\,\big(\cC^\infty(M), \star\big)
        \ar[r, "\sim"]
        & \big(H^0(D_M),\ast\big)\,,
    \end{tikzcd}
\end{equation}
is an isomorphism of associative algebras. This allows
one to define a trace on $\big(\cC^\infty(M),\star\big)$
by using structures defined on the Weyl bundle, namely
the Fedosov differential and a Hochschild cocycle
of the Weyl algebra. More precisely, given a function
$f \in \cC^\infty(M)$, its trace is given by
\begin{equation}
    \Tr_{A_M}(f)
    := \int_M \Phi(\tau(f);A_M, \dots, A_M)\,,
    \label{eq:def_trace}
\end{equation}
where
\begin{equation}
    \Phi: \underbrace{\cA_{2n} \otimes \dots
    \otimes \cA_{2n}}_{2n\,\text{times}} \to \cA_{2n}^*\,,
\end{equation}
the representative of the Hochschild cohomology of $\cA_{2n}$
with values in its dual $\cA_{2n}^*$ derived in \cite{Feigin2005}.
That this (multi-linear) map is a cocycle means that it verifies
\begin{equation}
    0=\Phi(f \ast a_0; a_1, \dots, a_{2n})
    + \sum_{k=1}^{2n} (-1)^k\, \Phi( f; a_0, \dots,
    a_{k-1} \ast a_k, \dots, a_{2n})
    - \Phi(a_{2n} \ast f; a_0, \dots, a_{2n-1})\,,
\end{equation}
for any $a_0, \dots, a_{2n}, f \in \cA_{2n}$. Antisymmetrizing
this identity in the $a_i$ arguments yields\,\footnote{Note that
this operation corresponds to producing a Chevalley--Eilenberg
cocycle for the commutator algebra out of the original Hochschild
cocycle.}
\begin{equation}
    0 = \sum_{\sigma\in\cS_{2n+1}} (-1)^\sigma\,
    \Big(\Phi([f,a_{\sigma_0}]_\ast;a_{\sigma_1},\dots,a_{\sigma_{2n}})
    - \tfrac12\,
    \Phi(f;[a_{\sigma_0},a_{\sigma_1}]_\ast,\dots,a_{\sigma_{2n}})\Big)\,,
\end{equation}
which can be re-written as
\begin{eqnarray}
    0 & = & \sum_{i=0}^{2n}\,(-1)^i\,
    \Phi([f,a_i]_\ast;a_0,\dots,\hat a_i,\dots,a_{2n}) \\
    && \qquad + \sum_{i<j} (-1)^{i+j}\,\Phi(f;[a_i,a_j]_\ast,a_0,\dots,
    \hat a_i,\dots,\hat a_j,\dots,a_{2n})\,,
\end{eqnarray}
where hats denote the omission of the arguments, and with
the understanding that the $\Phi$ appearing in this formula
is the antisymmetric part in the last $2n$ arguments of the original
cocycle. Using the above identity, and the fact that
\begin{equation}
    d_M A_M + \tfrac12\,[A_M, A_M]_\ast = 0\,,
    \qquad 
    D_M F = 0 = D_M G\,,
\end{equation}
for $F = \tau(f)$ and $G = \tau(g)$ the lifts of a pair of
functions $f, g \in \cC^\infty(M)$,
one can check that \cite[Prop. 4.2]{Feigin2005}
\begin{equation}
    \Phi\big([F, G]_\ast; A_M, \dots, A_M\big)
    = d_M\,\Phi\big(F;G, A_M, \dots, A_M\big)\,,
\end{equation}
and 
\begin{equation}
    \delta_\xi \Phi(F;A_M, \dots, A_M)
    = d_M\,\Phi\big(F;\xi, A_M, \dots, A_M\big)\,,
\end{equation}
for any section $\xi$ of the Weyl bundle. In plain words,
the obstruction for $\Phi(\tau(-);A_M,\dots,A_M)$ to vanish
on $\star$-commutator and to be gauge-invariant is exact,
which implies that for compactly supported functions,
the operation \eqref{eq:def_trace} does define a trace,
and that this definition is independent of the choice
of Fedosov differential $D_M$.

\section{Flat connection in the Weyl algebra}
\label{app:proof}
In this Appendix, we give a proof of Proposition \ref{prop:flattening},
as well as some details about the construction
of covariantly constant sections. In the rest
of this section, $\deg(-)$ will refer to the Fedosov
degree \eqref{Fedosov-degree} on the Weyl algebra.
\begin{proof}[Proof of Prop. \ref{prop:flattening}]
    Let the Weyl bundle connection be given by
    $\cD=d+\tfrac1\hbar\,[\varpi,-]_\ast$, where
    \begin{equation}
        \varpi = dx^\mu\,e_\mu^a\,p_a + \omega\,,
        \qquad \text{with} \qquad
        \omega := \sum_{n\geq2} \varpi_{(n)}\,,
        \qquad
        \deg(\varpi_{(n)})=n\,,
    \end{equation}
    with $e^a_\mu$ invertible, and decompose its curvature
    \begin{equation}
        R := d\varpi + \tfrac1{2\hbar}\,[\varpi,\varpi]_\ast
        = \sum_{n\geq1} R_{(n)}\,,
    \end{equation}
    according to the Fedosov degree. Let us furthermore consider
    \begin{equation}
        A = \varpi + W\,,
        \qquad \text{with} \qquad 
        W = \sum_{n\geq2} W_{(n)}\,,
        \qquad
        \deg(W_{(n)}) = n\,,
    \end{equation}
    whose curvature is given by
    \begin{equation}
        dA + \tfrac1{2\hbar}\,[A,A]_\ast
        = R - \delta W + \cD_\omega W
        + \tfrac1{2\hbar}\,[W,W]_\ast\,,
    \end{equation}
    where 
    \begin{equation}
        \delta := -\tfrac1\hbar\,[e^a\,p_a,-]_\ast
        = e^a\,\tfrac{\partial}{\partial y^a}\,,
        \qquad
        \cD_\omega := \tfrac1\hbar\,[\omega,-]_\ast\,,
    \end{equation}
    are derivations of degree $\deg(\delta)=-1$
    and $\deg(\cD_\omega) \geq 0$ (by which we mean
    that $\cD_\omega$ maps elements of degree $k$
    to elements of degree higher or equal to $k$).
    In fact, $\delta$ defines a differential
    on $\Omega(\manX,\hat \cA_{2n})$ since $\delta^2=0$.
    At order $n$, the flatness condition for $A$ reads 
    \begin{equation}\label{eq:flatness}
        \delta W_{(n+1)} = R_{(n)} + (\cD_\omega W)_{(n)}
        + \tfrac1{2\hbar}\,\sum_{k=2}^n [W_{(k)},W_{(n+2-k)}]_\ast\,,
    \end{equation}
    and in particular, the right hand side of the above
    equation only contains the components $W_{(k)}$ with $k \leq n$.
    Since the right hand side is $\delta$-closed by virtue
    of the fact that $(-\delta+\cD_\omega)R = 0$,
    which corresponds to the Bianchi identity of
    the connection $\cD$, and $[W,[W,W]_\ast]_\ast=0$
    (Jacobi identity), the above equation can be read
    as the condition that the right hand side is $\delta$-exact.
    In other words, assuming that there exists $W_{(k)}$
    with $1 \leq k \leq n$ such that the curvature of $A$
    vanishes in degrees lower or equal to $n-1$,
    then the existence of $W_{(n+1)}$ ensuring the vanishing
    of the curvature of $A$ up to degree $n$ is encoded
    in the cohomology group of $\delta$ in (form) degree $1$.
    The contracting homotopy $h$, introduced
    in \eqref{eq:contracting}, can therefore
    be used to solve $W_{(n+1)}$ in terms of the lower
    components, namely
    \begin{equation}\label{eq:recursion_connection}
        W_{(n+1)} = h\Big(R_{(n)} + (\cD_\omega W)_{(n)} 
        + \tfrac1{2\hbar}\,\sum_{k=2}^n [W_{(k)},W_{(n+2-k)}]_\ast\Big)\,,
    \end{equation}
    which solves \eqref{eq:flatness} by virtue of the fact
    that $\{h,\delta\}=1$ on $\cA_{2n}$-valued $1$-forms
    and $h^2=0$, and therefore uniquely specifies $W$
    in terms of the curvature of $\cD$.

    Having solved for $A$, we can now solve for $F$
    in a similar manner. Indeed, the covariant constancy
    condition reads
    \begin{equation}
        \delta F_{(n+1)} = (\cD_\omega F)_{(n)}
        + \tfrac1\hbar\,
        \sum_{k=2}^{n+2}\,[W_{(k)},F_{(n+2-k)}]_\ast = 0\,,
    \end{equation}
    at order $n$. Here again, one can use the contracting
    homotopy to express $F_{(n+1)}$ in terms of lower order
    components, namely
    \begin{equation}
        F_{(n+1)} = h\Big((\cD_\omega F)_{(n)} + \tfrac1\hbar\,
        \sum_{k=2}^{n+2}\,[W_{(k)},F_{(n+2-k)}]_\ast\Big)\,.
    \end{equation}
    Note that $h(F)=0$ due to the fact that $F$ is a $0$-form.
\end{proof}

This way of constructing
a flat connection starting from a possibly curved one
is a variation on the Fedosov approach to the deformation
quantization problem of symplectic manifold
\cite{Fedosov:1994zz}, as explained in
\cite[App. A]{Grigoriev:2016bzl} (see also 
\cite{Dolgushev:2003fg}). By construction, the completion of
some possibly curved connection $\cD$ into a flat
connection $A$ will only contain the curvature of $\cD$,
as well as covariant derivatives and contraction thereof.

Note that in the previous proof, we implicitly assumed that
the connection $\cD$ has no degree-$0$ piece, i.e.
that $\varpi$ does not contains a term $a \in \Omega^1(\manX)$,
which lies in the center of the Weyl algebra. If such a term 
is present, Proposition \ref{prop:flattening} still holds
due to the following line of argument. Since $a$ has degree $0$,
its field strength $da$ contributes to the curvature
of the connection $\varpi=a + e^a\,p_a + \omega$ in degree $0$
as well and hence requires the introduction of a term $W_{(1)}$
of degree $1$ in $W$ to be compensated, i.e.
$\delta W_{(1)}=da$. This equation is solved by
$W_{(1)}=-e^a\,\partial_a a_b\,y^b$, and the obstruction
to the flatness of the connection $\varpi+W_{(1)}$
is now of degree $2$ and higher, and the degree $1$ piece
of this connection defines a new differential
\begin{equation}
    \delta' := -\tfrac1\hbar\,e^a\,
    [\,p_a-\partial_a a_b\,y^b,-]_\ast\,,
\end{equation}
with respect to which the condition that $A=\varpi+W$
is flat reads as in \eqref{eq:flatness} upon replacing
$\delta$ with $\delta'$. Consequently, the existence of 
$W$ also boils down to a cohomology problem, i.e. it is
guaranteed provided that the cohomology of $\delta'$
is empty in form degree $1$. As a matter of fact,
one can show that the cohomology of $\delta'$ is empty
in form degree greater or equal to one.\footnote{Let us
start by looking for $\delta'$-cocycles, i.e. solutions
to $\delta'\alpha=0$. This equation can be expanded
with respect to the degree in $y$
as $\delta \alpha_{(k+1)}+\delta_a \alpha_{(k)}=0$,
where $\delta_a:=\tfrac1\hbar\,[e^a\partial_a a_b\,y^b,-]_\ast$.
Since $\delta$ and $\delta_a$ commute, $\delta_a\alpha_{(k)}$
is a $\delta$-cocycle and hence the existence of $\alpha_{(k+1)}$
is ensured whenever $\alpha$ has form (strictly) positive
form degree. This shows existence of $\delta'$-cocycles
in any positive form degree. Now given such a cocycle $\alpha$,
let us look for a solution to $\delta'\beta=\alpha$.
Once again, expanding this equation with respect to
the degree in $y$, one can see that it can be solved
iteratively by $\beta_{(k+1)}=h(\alpha_{(k)}+\delta_a\beta_{(k)})$.}
For the same reason, the lift of sections of $S(T\manX)$
to covariantly constant sections of the Weyl bundle $W(\manX)$
also exists.

Let us note that elements of the Weyl algebra which are
linear in $p$ form a Lie subalgebra under the star-commutator.
In fact, the commutator of these elements reduces to the Poisson
bracket, i.e.
\begin{equation}
    \tfrac1\hbar\,[f,g]_\ast
    = \big(\tfrac{\partial f}{\partial y^a}\,
    \tfrac{\partial g}{\partial p_a}
    - \tfrac{\partial f}{\partial p_a}\,
    \tfrac{\partial g}{\partial y^a}\big)
    =: \{f,g\}\,,
\end{equation}
for $f,g \in \cA_{2n}$ linear in $p$.\footnote{This subalgebra
is often referred to as the algebra of formal vector
fields. This algebra is a central piece of `formal geometry'
\cite{Gelfand:1971jg, Gelfand1972}, 
and can be obtained as the prolongation (in the sense
of Kobayashi \cite{Kobayashi2012}) of the inhomogeneous
general linear algebra $\mathfrak{igl}(n,\R)$.}
Consequently, the completion of a connection $\varpi$
which takes values in the subalgebra of elements linear in $p$
will also share this property, since the recursive procedure
\eqref{eq:recursion_connection} will only produce terms
taking values in this subalgebra.

Let us also remark that, if $\varpi$ is linear in $p$,
and hence so is $A$, then the highest power in $p$ contained
in covariantly constant section $F$ is the same as that
of $F_{(0)}$. More precisely, if $F_{(0)}$ is an homogeneous
polynomial in $p$ of degree $s$, then $F$ is a polynomial
in $p$ with terms of degrees $s$, $s-2$, $s-4$, \dots
\cite[Prop. A.2]{Grigoriev:2016bzl}.

\section{Flattening the HS frame}
\label{app:proof-HSframe}
Here we give an alternative procedure of encoding
the unconstrained CHS fields into the flat connection $A$.
In this context, it is convenient to employ an alternative degree:
\begin{equation}
    \deg(y) = 1\,, \qquad \deg(p)=0\,,
    \qquad 
    \deg(\hbar) = 1\,,
\end{equation}
and expand $A$ as $A=A_0+A_1+\ldots$ into the homogeneous components.
We treat $A_0=\sum_s dx^\mu E_\mu^{b(s)}p_{b(s)}$ as the initial data 
and, as before, assume that the term linear in $p$ in $A_0$
is invertible.  In this setup, the role of $\delta$ is played
by a new operator of degree $-1$:
\begin{equation}
\delta^\prime=-\frac{1}{\hbar}\qcommut{A_0}{\cdot}=
\delta+\sum_{l=1}^\infty \delta_l\,,
\end{equation}
where $\delta_l$ has the total homogeneity degree $l$ in $p_a$
and  $\hbar$. The standard homological algebra argument then shows
that $\delta^\prime$-cohomology is empty in the nonvanishing
form-degree. Moreover, the respective contracting homotopy operator
can be chosen to have homogeneity $1$ in $y^a$.\footnote{This happens
because $\delta$ has no cohomology in nonvanishing form-degree,
while the complex is quasi-isomorphic to $(H^\bullet(\delta),\Delta)$, 
where  $\Delta$ is a differential induces by $\delta^\prime$
in $H^\bullet(\delta)$. Note, however, that contracting homotopy implies inverting the HS frame and hence could result in elements which are nonpolynomial in $p_a$.} 

At degree $1$, we take $A_1=a_1+\Gamma$,
where $\Gamma=dx^\mu\omega_{\mu}{}^c{}_b\,y^b p_c$ encodes
coefficients of the bare affine connection. The role of $\Gamma$
is to maintain covariance of the procedure. Indeed,
as we are going to see, with this choice only covariant derivatives
of the fields entering $E_\mu(p)$ enter the construction.
At degree zero, the equation $dA+\tfrac1{2\hbar}\,[A,A]_\ast=0$
implies
\begin{equation}
    \delta^\prime a_1=\nabla A_0
    \qquad \Leftrightarrow \qquad
    dA_0+\tfrac1\hbar\,\qcommut{A_1}{A_0}=0
\end{equation}
and can be interpreted 
as the fact that the HS frame $A_0$ is covariantly constant
with respect to HS connection 
$A_1$. Because $A_0$ is $y$-independent, $\delta^\prime A_0=0$
so that the consistency condition $\delta^\prime\nabla A_0=0$ 
is fulfilled and hence $a_1$ exists. Such a ``connection''  
$A_1$ can be considered the ``torsion-free'' HS connection.
Note that to avoid the dependency on the arbitrary connection
$\Gamma$, one could take as $\Gamma$ a Levi--Civita connection 
determined by the spin-2 frame $e^a_\mu$ and the constant Minkowski 
metric $\eta^{ab}$.

At degree $1$, we have
$\delta^\prime A_2=d A_1+\tfrac1{2\hbar}\,[A_1,A_1]_\ast$,
The consistency condition again holds:
\begin{equation}
    \delta^\prime\left[d A_1+\ffrac{1}{2\hbar}\qcommut{A_1}{A_1}\right]
    =-d\delta^\prime A_1 +\tfrac1\hbar\qcommut{dA_0}{A_1}
    +\tfrac1\hbar\qcommut{\delta^\prime A_1}{A_1}=0\,,
\end{equation}
where we have made use of 
$\delta^\prime A_1=dA_0$, and hence $A_2$ also exists.

One then proceeds by the standard induction. Let us recall how it goes.
Introducing $\Omega=\hbar d+A$, the zero-curvature condition
is equivalent to  $\qcommut{\Omega}{\Omega}=0$. At degree $k$,
the equation reads
\begin{equation}
    \delta^\prime A_{k+1}
    +\ffrac{1}{2\hbar}\qcommut{\Omega^k}{\Omega^k}\rvert_k=0\,,
    \qquad \text{with} \qquad 
    \Omega^k := \hbar d+\sum_{l=0}^k A_l\,,
\end{equation}
where $C|_{k}$ denotes the degree $k$ component of $C$,
and it is assumed that the equation is already solved to order $k$.
The consistency condition
$\delta^\prime (\qcommut{\Omega^k}{\Omega^k}|_k)=0$
is the degree $k$ component of the identity
$\qcommut{\Omega^k}{\qcommut{\Omega^k}{\Omega^k}}=0$,
where the induction assumption has been used. It is therefore
satisfied, which implies the existence of $A_{k+1}$. In other words,
any $A_0=dx^\mu E_\mu(p)$  has a unique (modulo gauge transformations) 
lift to $A$ satisfying the flatness condition
$dA+\tfrac1{2\hbar}\,[A,A]_\ast=0$. It is natural to call
$dx^\mu E_\mu(p)$ a HS frame-field. 

\footnotesize
\setlength{\itemsep}{0em}
\providecommand{\href}[2]{#2}\begingroup\raggedright\endgroup

\end{document}